\newtheorem{prop}{Proposition}[section]
\newtheorem{de}{Definition}[section]
\newtheorem{cor}{Corollary}[section]
\newtheorem{thm}{Theorem}[section]
\newtheorem{re}{Remark}[section]
\newtheorem{ass}{Assumption}[section]
\newtheorem*{acknowledgement*}{Acknowledgement}
\newcommand{\vast}{\bBigg@{4}}
\newcommand{\Vast}{\bBigg@{5}}
\newcommand{\sm}{\scalebox{0.75}[1.0]{\( - \)}}
\begin{document}

\title{Stochastic Replicator Dynamics Subject to Markovian Switching }
\author{Andrew Vlasic \\ Department of Mathematics and Statistics \\ Queen's University }
\date{}

\maketitle

\begin{abstract}
Population dynamics are often subject to random independent changes in the environment. For the two strategy stochastic replicator dynamic, we assume that stochastic changes in the environment replace the payoffs and variance. This is modeled by a continuous time Markov chain in a finite atom space. We establish conditions for this dynamic to have an analogous characterization of the long-run behavior to that of the deterministic dynamic. To create intuition, we first consider the case when the Markov chain has two states. A very natural extension to the general finite state space of the Markov chain will be given. 
\end{abstract}

{\bf Keywords:} Replicator dynamic; Markovian switching; Lyapunov function; Stochastic process; Stochastic stability

\section{Introduction}
Stochastic environments where independent external forces change the dynamic of the system are common in biological and economic settings \cite{KL05, H89, BW11, JW11, C09, GGMP12, TDHS06, HFV13, FL12, KKBL05, aidoo2002protective}.  An example to illustrate a complete and state-independent change in the dynamic is sickle-cell anemia \cite{aidoo2002protective, allison1954protection}. Being a carrier for sickle-cell lowers an individual's fitness, however, during malaria outbreaks, since sickle-cell carriers have immunity, which increases their fitness. The random event of malaria outbreaks may be described as a continuous-time Markov chain that is independent of the population dynamic,  yet changes the population dynamic. 

Antibiotics affecting microbial populations is another example of a population subjected to an independent stochastic environment \cite{AH05}. The authors discuss the affects of antibiotics to bacteria, such as \textit{Escherichia coli} and \textit{Salmonella enterica}, and what persistent environment is needed to support the type that is antibiotic resistant.

Although the motivation for our model is mostly biological, there is also a relationship with economics. A commonly used tool in economics is Markovian switching \cite{H89, BW11, JW11, C09}. This includes modeling business cycles and GDP growth, electricity spot price models, and interest rates, all of which are agents working in a stochastic environment. 

Kussell and Leibler \cite{KL05} model the phenomenon of when bacteria change their phenotype to adjust to the stochastic environment.  The authors assume there are $n$-phenotypes and linear growth  where the fitness and switch to another phenotype is contingent on the current state of the environment. The stochastic environment is modeled by a $k$ atom state continuous time Markov process, and is independent of the evolution of the population.  In a current state, if a certain phenotype's fitness is comparatively small, this increases the probability of phenotype switching. The authors then derive the optimal long-term growth rate.

Markovian switching has also been applied to Lotka-Volterra and epidemiological population dynamics \cite{GGMP12, TDHS06}. Gray et al \cite{GGMP12} assumed a deterministic susceptible-infected-susceptible model and changed parameters according to a continuous Markov chain.  The author discovered that the parameters coupled with the unique invariant measure of the Markov chain gave essentially new rates and found similar inequalities for either an endemic to occur or for the disease to become negligible. Takeuchi et al \cite{TDHS06} analyzed the switching between two deterministic Lotka-Volterra models and showed that this system is neither permanent nor dissipative. 

Fudenberg and  L. A. Imhof \cite{FL12} applied a simpler method where the event that switches the fitness of a population modeled by a Moran process was independent and identically distributed. The authors assumed a two state switched system, considered the mean of the fitness, and compared the switched fitnesses to derive their results. 

Considering both a discrete Moran and a deterministic continuous time replicator dynamic, Harper et al \cite{HFV13} applied a method similar to of Fudenberg and  L. A. Imhof to determine whether the mean game of the switched system was either a strategy 1  dominant (prisoner's dilemma), strategy 2  dominant (prisoner's dilemma), coordination game, or mixed strategy  dominant (hawk-dove). For the continuous time replicator dynamic, the authors determined this classification by comparing the ratios of the difference between the payoffs of the two underlying games and the ratio of whether the event will occur or not. These results are different than the ones derived in this paper.

We analyze a Markovian switched stochastic replicator dynamic with two strategies and determine conditions for this ``new" game to be classified in one of the four games mentioned in the previous paragraph. The times between jumps to another state for the continuous time Markov chain is assumed to have an exponential distribution. Since the switched systems are stochastic, the classifications are similar to the ones given by Fudenberg and Harris \cite{FH92},  in that the inequality of a payoff of pure strategy against itself and the other strategy are perturbed by half the difference of the variances (perturbation from the white noise), and the \textit{comparison} of the transition from a fixed state to the other states (perturbation from the Markov chain). Since the switching indirectly perturbs the dynamic, appropriately determined constants (that are not unique) are associated to a particular state. The difference between the constant of the fixed state and another state, multiplied by this transition rate, compare this transition. The sum of these terms encompasses the entire transition \textit{comparison} for this state. For example, if the dynamic switches between two states where the transition rates are equal, then the addition/subtraction of an appropriately sized constant to the inequalities derived by Fudenberg and Harris \cite{FH92} determine the proper inequalities for the dynamic.

To help create intuition, we first consider a Markov chain in a two atom state space, then extend the analogous results to the general finite atom state space.  To illustrate the conditions for the long-run behavior, we give an example of cooperation in a stochastic environment where defection is punished in one environment, and not punished in the other. The efficacy of punishment is then explored.

\section{Stochastic Replicator Dynamic}
Consider a two-player symmetric game, where $a_{ij}$ is the payoff to a player using pure strategy $S_i$ against an opponent employing strategy $S_j$, and take $A=(a_{ij})$ as the payoff matrix. Within a population we assume that every individual is programmed to play a pure strategy. For $i=1,2$, let $p_i(t)$ be the size of the subpopulation that plays strategy $S_i$ at time $t$, which we denote as the $i^{th}$ subpopulation. Furthermore, define $\mathbf{p}(t):= (p_1(t),  p_2(t) )^T $, $\displaystyle P(t) : = p_1(t) +p_2(t)$, and $\mathbf{s}(t) := (s_1(t), s_2(t))^T$ where $s_i(t) := p_i(t)/P(t)$ \Big( the frequency of the $i^{th}$ subpopulation\Big). When a player in the $i^{th}$ subpopulation is randomly matched with another player from the entire population, $\big(A\mathbf{s}(t) \big)_i$ is the average payoff for this individual, which we take to be the fitness of the player. We assume growth is proportional to fitness: 
\begin{equation*}
 \dot{p}_i(t) = p_i(t) \big( A \mathbf{s}(t) \big)_i,
\end{equation*}  
and hence
\begin{equation*}
\dot{s}_i(t) = s_i(t) \bigg( \big(A \mathbf{s}(t) \big)_i - \mathbf{s}(t)^T A \mathbf{s}(t) \bigg).
\end{equation*}  
This is the \textbf{replicator dynamic}.  For uniformity of notation, we consider the payoff matrix
$
A=
\left(
\begin{array}{cc}
 a &   b   \\
  c &   d   \\  
\end{array}
\right).
$
Simplifying the dynamic above, we have
\begin{equation*}\begin{split}
\dot{s}_1(t) & = s_1(t) s_2 (t) \Big[ \big( a-c \big)s_1(t) +  \big( b-d \big) s_2(t)   \Big]  \\
\dot{s}_2(t) &  = s_1(t) s_2 (t) \Big[ \big( c-a \big)s_1(t) +  \big( d -b \big) s_2(t)  \Big] .
\end{split}\end{equation*}
Since $s_2(t)=1-s_1(t)$ (see \cite{cressman2003evolutionary}), we may focus on the dynamic of $\displaystyle \dot{s}_1(t)  = s_1(t) \big( 1-s_1(t) \big) \Big[ -b + d  + \big( a - c  +  d-b \big) s_1(t)   \Big] $. In this dynamic, if:
\begin{enumerate}
\item $a>c$ and $d<b$ then $S_1$ is the only  dominant strategy (strategy 1 dominant), $(1,0)$ is stable and $(0,1)$ is unstable;
\item $a<c$ and $d>b$ then $S_2$ is the only  dominant strategy (strategy 2 dominant), $(1,0)$ is unstable and $(0,1)$ is stable;
\item $a>c$ and $d>b$ then $S_1$ and $S_2$ are the only dominant strategies where $(1,0)$ and $(0,1)$ are stable, and convergence to either point is contingent on the initial condition (coordination game);
\item and $a<c$ and $d < b$ then $\displaystyle \bigg( \frac{a-c}{a-c+d-b} , \frac{d-b}{a-c+d-c}  \bigg)$ is the only evolutionary stable strategy (mixed strategy  dominant).
\end{enumerate}
In this paper we give conditions that correspond to one of the games listed above. We now describe the stochastic replicator dynamic and discuss the analogous characterizations. 

Throughout this paper, we have the complete probability space $(\Omega, \mathcal{F} , P)$, with the filtration $\{ \mathcal{F}_t \}_{ t\in\mathbb{R}_+ }$, where $\mathcal{F}_0$ contains all of the null sets of $\mathcal{F}$, and the filtration is right-continuous. Fudenberg and Harris \cite{FH92} consider a continuous time stochastic replicator dynamic by first assuming
\begin{equation*}
 dp_i(t) = p_i(t) \bigg( \big(A \mathbf{s}(t) \big)_i dt + \sigma_idW_i(t) \bigg), 
\end{equation*}
for $\sigma_i \in \mathbb{R}_+ $ and $W_i(t)$ a pairwise independent standard Wiener processes. For $i=1,2$ and $j = 3- i$, It\^o's lemma yields 
\begin{equation*}
ds_i(t) = s_i(t) s_j (t) \bigg[ \Big( \big(A \mathbf{s}(t) \big)_i - \big(A \mathbf{s}(t) \big)_j \Big)dt   + \Big(  \sigma_j^2  s_j (t) -  \sigma_i^2 s_i(t) \Big)dt + \Big( \sigma_i dW_i (t)  -  \sigma_j dW_j(t) \Big)  \bigg] .
\end{equation*}
This is known as the \textbf{stochastic replicator dynamic}. The idea behind this model is that randomness comes from the aggregate shock, or population level interactions, that affects the fitness of each type. The only stationary points for this dynamic are the vertices of the simplex. 

 Keeping the payoff matrix as 
$\displaystyle A=\left(
\begin{array}{cc}
 a &  b   \\
 c &  d
\end{array}
\right)
$, we have the dynamic
 \begin{equation*}\begin{split}
ds_1(t) & = s_1(t) s_2 (t) \bigg[ \big( a-c \big)s_1(t) +  \big( b-d \big) s_2(t)   + \sigma_2^2  s_2(t) -  \sigma_1^2 s_1(t)  \bigg] dt + s_1(t) s_2 (t)\sigma dW (t) \\
ds_2(t) &  = s_1(t) s_2 (t) \bigg[ \big( c-a \big)s_1(t) +  \big( d -b \big) s_2(t)   + \sigma_1^2  s_1(t) -  \sigma_1^2 s_2(t)  \bigg] dt - s_1(t) s_2 (t)\sigma dW (t),
\end{split}\end{equation*}
where $W(t)$ is a standard Brownian motion, and $\displaystyle \sigma = \sqrt{ \sigma_1^2 + \sigma_2^2  }$.

\bigskip
 \begin{re}\label{re1}
For $0<x<1$, $\displaystyle P_x \big( s(t) \in (0,1) \big)=1$ for all finite $t$. See \cite{I05} for further information.
\end{re}

Since $s_2(t)=1-s_1(t)$,  we may focus on the dynamic
\begin{equation}\begin{split} \label{fh}
ds_1(t) & = s_1(t) \big( 1-s_1(t) \big) \bigg[ \big( -b + d +  \sigma_2^2 \big)    + \big( a - c  -  \sigma_1^2 +   d-b -\sigma_2^2 \big)s_1(t)   \bigg] dt  + s_1(t) \big( 1-s_1(t) \big) \sigma dW (t).
\end{split}\end{equation}
For simplicity we write $s(t)$ instead of $s_1(t)$. Fudenberg and Harris \cite{FH92} assumed the same inequalities of the payoffs as above and derived the following proposition to determine the conditions for the stability of the process.

\bigskip
\begin{prop}[Fudenberg and Harris \cite{FH92}] \label{fh_r}
For the dynamic given by  Equation \eqref{fh}, and initial condition $0<x_0<1$, we have that
\begin{enumerate} 
\item (Strategy 1  Dominant) If $a -c > \big( \sigma^2_1 - \sigma^2_2  \big)/2$ and $d-b <  \big( \sigma^2_2 - \sigma^2_1  \big)/2$ then $s(t) \to 1$ as $t \to \infty$ a.s.
\item  (Strategy 2  Dominant) If $a -c <  \big( \sigma^2_1 - \sigma^2_2  \big)/2$ and $d-b >  \big( \sigma^2_2 - \sigma^2_1  \big)/2$, we have $s(t) \to 0$ as $t \to \infty$ a.s.
\item (Coordination) If $a -c  >  \big( \sigma^2_1 - \sigma^2_2  \big)/2$ and $d-b >  \big( \sigma^2_2 - \sigma^2_1  \big)/2$ then $s(t) \to 1$ as $t \to \infty$ with probability $\frac{I_1(x_0)}{I_1(x_0) + I_2(x_0) }$ and $s(t) \to 0$ as $t \to \infty$ with probability $\frac{I_2(x_0) }{I_1(x_0) + I_2(x_0) }$. (For the exact values of $I_1(x_0)$ and $I_2(x_0)$, see \cite{FH92})
\item (Mixed Strategy Dominant) If $a -c  <  \big( \sigma^2_1 - \sigma^2_2  \big)/2$ and $d-b <  \big( \sigma^2_2 - \sigma^2_1  \big)/2$ then $\displaystyle P_{x_0}\Big(\liminf_{t \to \infty} s(t) =0 \Big) = P_{x_0}\Big( \limsup_{t \to \infty} s(t) =1 \Big) =1$. In fact, the process is positive recurrent with a unique invariant measure.
\end{enumerate}
\end{prop}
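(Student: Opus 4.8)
\emph{Proof sketch.} The plan is to establish the classification by regarding \eqref{fh} as a regular one-dimensional It\^o diffusion on $(0,1)$ and reading off the four regimes from its scale function and speed measure. By Remark~\ref{re1} the solution never reaches $0$ or $1$ in finite time, so both endpoints are inaccessible and the behaviour as $t\to\infty$ is governed entirely by how the scale function behaves at the two ends. Write the diffusion coefficient as $\beta(s)^2=\sigma^2 s^2(1-s)^2$ and let $\mu(s)$ denote the drift in \eqref{fh}. A short computation with partial fractions gives $\frac{2\mu(s)}{\beta(s)^2}=\frac{2}{\sigma^2}\Big(\frac{b-d+\sigma_2^2}{s}+\frac{a-c-\sigma_1^2}{1-s}\Big)$, so the scale density is $p'(s)=C\,s^{-\alpha}(1-s)^{\gamma}$ with $\alpha=\frac{2(b-d+\sigma_2^2)}{\sigma^2}$ and $\gamma=\frac{2(a-c-\sigma_1^2)}{\sigma^2}$, and the speed density is $m'(s)\propto s^{\alpha-2}(1-s)^{-\gamma-2}$. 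The elementary facts $\int_{0^+}s^{-\alpha}\,ds<\infty\iff\alpha<1$ and $\int^{1^-}(1-s)^{\gamma}\,ds<\infty\iff\gamma>-1$ rearrange into: $p(0^+)>-\infty$ exactly when $d-b>(\sigma_2^2-\sigma_1^2)/2$, and $p(1^-)<+\infty$ exactly when $a-c>(\sigma_1^2-\sigma_2^2)/2$. Hence the four hypotheses of the proposition are precisely the four sign patterns of the pair ``$p(0^+)$ finite?''~/~``$p(1^-)$ finite?''.

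With this dictionary in hand I would invoke the standard theory of one-dimensional diffusions. In the Strategy~1 dominant case $p(0^+)=-\infty$ and $p(1^-)<\infty$, so $M_t:=p(s(t))$ is a continuous local martingale bounded above, hence a supermartingale, hence converges a.s.\ to a finite limit $L$. Its quadratic variation is $\int_0^t (p'(s(u)))^2\beta(s(u))^2\,du$, which would be infinite on the event that $s(t)$ stayed near an interior point; since a convergent continuous local martingale has a.s.\ finite quadratic variation, $L$ must be a boundary value of $p$, and being finite it equals $p(1^-)$, so $s(t)\to1$ a.s. The Strategy~2 dominant case is the mirror image and gives $s(t)\to0$ a.s.

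In the Coordination case both $p(0^+)$ and $p(1^-)$ are finite, so $M_t=p(s(t))$ is a bounded martingale; the same quadratic-variation argument shows $s(t)\to0$ or $s(t)\to1$ a.s., and passing to the limit in $E_{x_0}[M_t]=p(x_0)$ yields $P_{x_0}(s(t)\to1)=\frac{p(x_0)-p(0^+)}{p(1^-)-p(0^+)}$, which after normalising $p$ is the ratio $I_1(x_0)/(I_1(x_0)+I_2(x_0))$ of \cite{FH92}. In the Mixed strategy dominant case $p(0^+)=-\infty$ and $p(1^-)=+\infty$, so $M_t$ is a non-explosive continuous local martingale taking values in all of $\mathbb{R}$, hence a time change of Brownian motion on $\mathbb{R}$ and therefore recurrent; consequently $s(t)$ visits every level of $(0,1)$ infinitely often, forcing $\liminf_{t\to\infty}s(t)=0$ and $\limsup_{t\to\infty}s(t)=1$ a.s. To upgrade recurrence to positive recurrence one checks that the speed measure is finite: $\int_{0^+}s^{\alpha-2}\,ds<\infty\iff\alpha>1\iff d-b<(\sigma_2^2-\sigma_1^2)/2$ and $\int^{1^-}(1-s)^{-\gamma-2}\,ds<\infty\iff\gamma<-1\iff a-c<(\sigma_1^2-\sigma_2^2)/2$, which are exactly the hypotheses of this case; the normalised speed measure is then the unique invariant probability measure.

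The routine pieces are the It\^o and partial-fraction computation of $p'$ and $m'$ and the integrability bookkeeping. The step that needs care is the boundary analysis: arguing that an \emph{inaccessible} endpoint carrying a finite scale value is nonetheless the almost-sure limit of $s(t)$ (ruling out trapping at an interior point through the quadratic variation of $p(s(t))$), and then matching the coordination-game probabilities and the invariant density of the last case to the explicit expressions in \cite{FH92}.
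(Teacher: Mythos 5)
The paper itself offers no proof of this proposition: it is imported verbatim from Fudenberg and Harris \cite{FH92}, so there is no internal argument to compare against. Your sketch is a correct, essentially self-contained derivation by the standard scale-function/speed-measure classification of a one-dimensional diffusion on $(0,1)$, which is the same type of analysis as in \cite{FH92}. The bookkeeping checks out: writing the drift as $s(1-s)\big[(b-d+\sigma_2^2)(1-s)+(a-c-\sigma_1^2)s\big]$ (note that Equation \eqref{fh} as printed carries a sign slip in its constant term, $-b+d+\sigma_2^2$, which you have tacitly corrected in line with Equation \eqref{M}), the scale density is $C\,s^{-\alpha}(1-s)^{\gamma}$ with your $\alpha,\gamma$, and finiteness of $p(0^+)$ and $p(1^-)$ translates exactly into the inequalities $d-b>(\sigma_2^2-\sigma_1^2)/2$ and $a-c>(\sigma_1^2-\sigma_2^2)/2$, so the four hypotheses are precisely the four boundary patterns; the speed-measure integrability in the mixed case also matches. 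Two touch-ups are needed. First, a continuous local martingale bounded above is a \emph{sub}martingale, not a supermartingale; the convergence you want follows by applying the nonnegative-supermartingale convergence theorem to $-M$ (or Doob's submartingale convergence), after which your quadratic-variation argument correctly rules out an interior limit and, since $p(0^+)=-\infty$, rules out the limit $0$. Second, in the mixed case the Dambis--Dubins--Schwarz step requires $\langle M\rangle_\infty=\infty$ a.s.; this should be stated and follows from the same trapping argument (if $\langle M\rangle_\infty<\infty$ then $M_t$ converges to a finite limit, forcing $s(t)\to 0$ or $1$ and hence $M_t\to\mp\infty$, a contradiction). With these repairs, and using the inaccessibility of the endpoints from Remark~\ref{re1}, your argument delivers all four statements, including the identification of the coordination-game probabilities with $I_1(x_0)/(I_1(x_0)+I_2(x_0))$ and of the normalized speed measure as the unique invariant law in the positive-recurrent case.
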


\bigskip
In this model there is a possibility that the variances will trump an evolutionary stable strategy, e.g., $\displaystyle c<a<c+\frac{\sigma^2_1 - \sigma^2_2 }{2}$, or change a dominated pure strategy dominant, e.g., $\displaystyle b > d > b+\frac{\sigma^2_2 - \sigma^2_1 }{2}$.

Throughout this paper, the ``stochastic'' classifications in Proposition \ref{fh_r} will be the reasoning for a vertex being either stable or unstable. We now ready to define the switched dynamic.

\section{Markovian Switching}
Let $r(t)$ be a continuous time Markov chain with the state space $\{ 1,2\}$ and generator
$
Q=
\left(
\begin{array}{cc}
 -q_{12} & q_{12}     \\
 q_{21} & - q_{21} 
\end{array}
\right),
$
where $q_{12}>0$ is a transition rate from state 1 to state 2, and $q_{21}>0$ is a transition rate from state 2 to state 1. So for $\delta>0$, $P\Big( r(t+\delta)=2 \Big| r(t)=1 \Big)= q_{12} \delta + o(\delta)$ and $P\Big( r(t+\delta)=1 \Big| r(t)=2 \Big)= q_{21} \delta + o(\delta)$.  Given an increasing sequence of times that the process jumps states (note these are stopping times), say $0=\tau_0<\tau_1< \ldots <\tau_k \to \infty$, r(t) may be written as
$$
r(t) =\sum_{ k=0}^{\infty} r( \tau_k) I_{ [\tau_k, \tau_{k+1}) } (t).
$$
Given that $ r( \tau_k)=i$, the times between jumps are exponentially distributed with parameter $q_{ij}$, with $j=3- i$. Hence, for any $T \geq 0$, $P\Big( \tau_{k+1} - \tau_{k} \geq T \Big| r(\tau_{k})=1 \Big) = e^{ -q_{12} T}$, and similarly for the initial condition $r(\tau_{k})=2$. Finally, this Markov chain has a unique stationary distribution $\Pi = \big( \pi_1, \pi_2\big)$, where $\displaystyle \pi_1 = \frac{q_{21} }{ q_{12} +q_{21} }$ and $\displaystyle \pi_2 = \frac{q_{12} }{ q_{12} +q_{21} }$. For further information, see Anderson \cite{A91}. 

For two stochastic replicator dynamics, define the payoff matrices $\displaystyle A_1=\left(
\begin{array}{cc}
 a_1 &  b_1   \\
 c_1 &  d_1
\end{array}
\right)
$,
and 
$\displaystyle A_2=\left(
\begin{array}{cc}
 a_2 &  b_2   \\
 c_2 &  d_2
\end{array}
\right),
$
and the corresponding diffusion coefficients $\sigma_{1 1}$ and $\sigma_{2 1}$, and $\sigma_{1 2}$ and $\sigma_{2 2}$. Assuming that $r(t)$ is independent of the Brownian perturbation, we consider the replicator dynamic with Markovian switching given by
 \begin{equation}\begin{split}  \label{M}
dY\big(s(t), r(t) \big) & := s(t) \big( 1-s(t) \big) \bigg[ \big( -d_{r(t)} + b_{r(t)} +  \sigma_{2 r(t)}^2 \Big)    + \big( a_{r(t)} - c_{r(t)}  -  \sigma_{1r(t)}^2 +   d_{r(t)} -b_{r(t)} -\sigma_{2 r(t)}^2 \big)s(t)   \bigg] dt \\
& + s(t) \big( 1-s(t) \big) \sqrt{ \sigma_{1 r(t)}^2 + \sigma_{2 r(t)}^2 } dW (t) \\
& := s(t) \big( 1-s(t) \big) \bigg[  -B_{r(t)}  + \big( A_{r(t)}  +   B_{r(t)} \big)s(t)   \bigg] dt + s(t) \big( 1-s(t) \big) \sigma_{r(t)} dW (t),
\end{split}\end{equation}
where $\displaystyle B_{r(t)} :=d_{r(t)} -b_{r(t)} -\sigma_{2 r(t)}^2$, $\displaystyle A_{r(t)} := a_{r(t)} - c_{r(t)}  -  \sigma_{1r(t)}^2$, and $\sigma_{r(t)}:=\sqrt{ \sigma_{1 r(t)}^2 + \sigma_{2 r(t)}^2 }$. To simplify the notation, we set $S(t) =Y\big(s(t), r(t) \big)$.

For $L$ the infinitesimal operator, and $V(x,i) \in [0,1] \times \{1,2\}$,  where for $i=1,2$, and $j:=3-i$, $V( \cdot ,i)$ is twice continuously differentiable, we have 
 \begin{equation*}\begin{split}
LV(x,i) & = x \big( 1-x \big) \Big[ -B_{i}  + \big( A_{i}  +   B_{i} \big) x  \Big] V'(x,i) + \frac{ \sigma_i^2}{2} x^2(1-x)^2 V''(x,i) \\
&  +  q_{ij} V(x,j) - q_{ij} V(x,i).
\end{split}\end{equation*}
For more information, see page 48 in \cite{MY06}, and page 103 in \cite{S89}.

We now define the notation of stochastic stability that will be used throughout this paper. Although very similar to the definitions given in Khasminskii \cite{RK80}, they are very natural to the switched dynamic. We follow \cite{KZY07, MY06}, and note that the definition is given for the point $x=0$ and for a process evolving on the unit interval, but has a natural extension to $x=1$. Although we first consider the case when $r(t)$ is in a two atom state, we give the definition for a  general finite $n$ atom state space, which we call $I:=\{1,2,\dots,n\}$.

 We define $P_{x,i}$ as the probability measure corresponding to $S(t)$ when $s(0)=x$ and $r(0)=i$, almost surely. Throughout this paper, the initial conditions are assumed to almost surely hold.  
 
 \bigskip
\begin{de}
The stationary point $x=0$ is said to be:
\begin{enumerate}[1.]
\item \textbf{stable in probability} if, for any $\epsilon>0$ and $i\in I$,
$$
\lim_{x\to 0} P_{x,i} \bigg( \sup_{t \geq 0} S(t)> \epsilon \bigg)=0;
$$
\item \textbf{asymptotically stable in probability} if it is stable in probability and for any $i\in I$,
$$
\lim_{x\to 0} P_{x,i} \bigg( \lim_{t\to\infty} S (t)=0\bigg)=1
$$
\end{enumerate}
\end{de}
 
\bigskip
We now define the possible properties that the dynamic will hold. The definitions are given for a process evolving in the unit interval, and are the adjusted definitions for a process on the real line. Please see \cite{MY06} for the general definitions.

\bigskip
\begin{de}  
For each of the following definitions, take the pair $(x,i)$ as the initial condition. 
\begin{enumerate}[1.]
\item A Markov process $S(t)$ is said to be regular if for any finite time $T>0$,
$$
P_{x,i} \bigg( \sup_{0 \leq t \leq T} S(t) = 0 \ \textnormal{or} \ 1 \bigg)=0
$$

\item For $\displaystyle U:= D \times J$, where $D \subsetneq [0,1]$ nonempty and $J \subset I$, and $\displaystyle \tau_U : = \inf\Big\{ t \geq 0 : S(t) \in U \Big\}$, a Markov process $S(t)$ is called recurrent with respect to $U$ if it is regular and for any finite time $T>0$,
$$
P_{x,i} \Big( \tau_{U} < \infty \Big)=1,
$$
where $(x,i)$ is an arbitrary initial condition in $ D^c \times I$. (The notation $D^c$ means the complement of the set $D$.)

\item If $S(t)$ is not recurrent, it is called transient. 

\item The process is called positive recurrent with respect to the set $U$ if it is recurrent and $D\subset (\omega_0,\omega_1)$, where $0<\omega_0<\omega_1<1$.

\end{enumerate}
\end{de}

\bigskip
In this paper we apply the stochastic Lyapunov method to derive the stability or instability of each vertex, and characterize long-run behavior of the dynamic by combining the stability or instability property for each vertex and utilizing these properties. Similar to the deterministic Lyapunov method, the stochastic Lyapunov method transforms the process so that it is a positive valued supermartingale (defined below), which implies that the process is decreasing.

\bigskip
\begin{de}
For $0 \leq s \leq t$ and a Markov process $X$, define $\displaystyle E\big[ X(t) \big| \sigma\{ X(s_0) : s_0 \leq s\}  \big]$ as the conditional expectation of the process at time $t$ dependent on the history of the process up to time $s$, (where $\sigma\{ X(s_0) : s_0 \leq s\}$ is the $\sigma$-algebra of the process up to time $s$). We call $X$ \textbf{martingale} if $\displaystyle E\big[ X(t) \big| \sigma\{ X(s_0) : s_0 \leq s\}  \big]= X(s)$, a \textbf{submartingale} if $\displaystyle E\big[ X(t) \big| \sigma\{ X(s_0) : s_0 \leq s\} \big] \geq X(s)$, and a \textbf{supermartingale} if $\displaystyle E\big[ X(t) \big| \sigma\{ X(s_0) : s_0 \leq s\}  \big] \leq X(s)$.
\end{de}

\section{Analysis of the Model}
Before we begin our analysis, we show an important characteristic of Equation \eqref{M}. Given any strict subinterval of the unit interval, $S(t)$ leaves this interval in finite time. Since each process of the switched dynamic has this property, the result is very natural. However, one may wonder if the switching could keep a significant number of sample paths within this interval. When reading the proof, notice that it does not depend on the size of the state space of $r(t)$, but that the state space has a finite number of atoms.

\bigskip
\begin{prop}\label{leave}
For $(x_0,x_1)\subset (\omega_0,\omega_1)$, where $0<\omega_0<\omega_1<1$, $x \in(x_0,x_1)$, and $\tau$ defined to be the time $S(t)$ leaves this interval, we have $ E_{x,i} \big[\tau \big] <\infty.$
\end{prop}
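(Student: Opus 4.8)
The plan is to find a smooth function $W$ on the interval $[\omega_0,\omega_1]$ whose infinitesimal generator $LW$ is bounded below by a positive constant uniformly on $(x_0,x_1)\times I$, and then apply Dynkin's formula to conclude that the exit time $\tau$ has finite expectation. Because the switching term $q_{ij}\big(W(x,j)-W(x,i)\big)$ appears in $LV$, the natural first move is to look for a Lyapunov function of the product form $W(x,i)=c_i\,w(x)$ or, even simpler, a function that does not depend on $i$ at all, $W(x,i)=w(x)$, so that the switching term vanishes identically. With $W(x,i)=w(x)$ the generator reduces to the one-dimensional diffusion operator $L_i w(x)=x(1-x)\big[-B_i+(A_i+B_i)x\big]w'(x)+\tfrac{\sigma_i^2}{2}x^2(1-x)^2 w''(x)$, and I need a single $w$ making $L_i w\ge \theta>0$ for every $i\in I$ and every $x\in(x_0,x_1)$.

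The key steps, in order, are: (i) Note that on the compact interval $[x_0,x_1]\subset(0,1)$ the factor $x(1-x)$ is bounded away from $0$, and the drift coefficients $-B_i+(A_i+B_i)x$ and diffusion coefficients $\sigma_i^2$ are bounded (finitely many states $i$), so the whole operator $L_i$ acting on a fixed smooth $w$ produces something continuous in $x$ with bounds independent of $i$. (ii) Choose $w(x)$ so that $w''$ dominates; a standard choice is $w(x)=-\big(x-x_1\big)^2$ or $w(x)=A-\lambda x^2$ for suitable constants, or more robustly $w(x)=-e^{\lambda x}$ with $\lambda$ large: then $w'=-\lambda e^{\lambda x}$, $w''=-\lambda^2 e^{\lambda x}$, and $L_i w = e^{\lambda x}\big[-\lambda x(1-x)(-B_i+(A_i+B_i)x)-\tfrac{\sigma_i^2}{2}\lambda^2 x^2(1-x)^2\big]$; since $x^2(1-x)^2\ge m>0$ on $[x_0,x_1]$ while the drift part is linear in $\lambda$, for $\lambda$ large enough the $\lambda^2$ term wins and $L_i w\le -\theta<0$ uniformly — equivalently $-w$ is a supersolution with $L_i(-w)\ge\theta$. (iii) Apply Dynkin's formula to $(-w)(S(t\wedge\tau))$ with this choice: $E_{x,i}[(-w)(S(t\wedge\tau))]=(-w)(x)+E_{x,i}\big[\int_0^{t\wedge\tau}L(-w)(S(u),r(u))\,du\big]\ge (-w)(x)+\theta\,E_{x,i}[t\wedge\tau]$. (iv) Since $-w$ is bounded on $[\omega_0,\omega_1]$ (hence on the closure of the range of $S$ up to $\tau$), the left side is bounded by a constant $M$, giving $\theta\,E_{x,i}[t\wedge\tau]\le M-(-w)(x)\le M'$; let $t\to\infty$ and invoke monotone convergence to get $E_{x,i}[\tau]\le M'/\theta<\infty$. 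Regularity of the process (so that $\tau$ is the genuine exit time and Dynkin applies up to it) is guaranteed because $S(t)$ stays in $(0,1)$ for all finite time by Remark \ref{re1}, and $r(t)$ makes only finitely many jumps in finite time.

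The main obstacle I expect is purely bookkeeping rather than conceptual: verifying that one fixed $\lambda$ (or one fixed pair of constants in the quadratic ansatz) can be chosen to make $L_i w$ negative with a uniform gap $\theta$ simultaneously over all $i\in\{1,\dots,n\}$ — this needs only that the finitely many drift bounds are finite and that $\inf_{x\in[x_0,x_1]}x^2(1-x)^2>0$, so it goes through, but it is the step where the hypothesis "finite number of atoms" (emphasized in the paragraph preceding the proposition) is actually used. A secondary point to handle carefully is justifying the interchange of limit and expectation when passing from $t\wedge\tau$ to $\tau$; monotone convergence on $t\wedge\tau\uparrow\tau$ handles it, but one should first know $\tau<\infty$ a.s., which itself follows from the uniform bound on $E_{x,i}[t\wedge\tau]$ by letting $t\to\infty$.
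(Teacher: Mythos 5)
Your proposal is correct and follows essentially the same route as the paper: the paper's Lyapunov function $V(x,i)=e^{\gamma}-e^{\gamma x}$, independent of $i$ so the switching term cancels, with $\gamma$ taken large so the second-order term dominates uniformly on the compact subinterval and over the finitely many states, is (up to sign and an additive constant) your $-e^{\lambda x}$ choice, and the conclusion via Dynkin's formula plus monotone convergence is identical. The only cosmetic difference is your initial sign slip (wanting $L_iw\geq\theta$ before switching to $L_iw\leq-\theta$), which you resolve consistently.
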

\begin{proof}
Define the positive function $V(x,i)=e^{\gamma} - e^{ \gamma x}$, where $\gamma>0$. Notice that $V(x,i)$ does not depend on $i$. For any $i$, we see that
\begin{equation*}\begin{split}
LV(x,i) & = - \gamma e^{ \gamma x} x (1-x ) \big[ -B_i + \big( A_i + B_i \big)x  \big]  - \gamma^2 e^{ \gamma x} \frac{ \sigma_i^2}{2}  x^2 (1-x)^2 \\
& = - \gamma e^{ \gamma x} x (1-x )  \Big\{ -B_i + \big( A_i + B_i \big)x   +  \gamma \frac{ \sigma_i^2}{2}  x (1-x) \Big\}.
\end{split}\end{equation*}
We now choose $\gamma$ large enough so that $\displaystyle  -B_i + \big( A_i + B_i \big)x   +   \gamma \frac{ \sigma_i^2}{2}  x (1-x) > 0$ for all $x \in ( x_0, x_1 )$, and $i \in \{1,2\}$. Thus, there exists a constant $K>0$, such that $L V(x,i) \leq -K$ for any $i$. Dynkin's formula yields
$$
0 \leq E_{x,i} \Big[  V\big( \tau \wedge t \big) \Big] = V(x,i) + E_{x,i}   \Big[ \int^{ \tau \wedge t}_{0} LV\big( S(u)  \big) du  \Big] \leq V(x,i) - K E_{x,i}  \big[ \tau \wedge t  \big],
$$
which implies that $\displaystyle E_{x,i}   \big[ \tau \wedge t  \big] \leq V(x,i)/K$. Taking $t \to \infty$, the monotone convergence theorem tells us that that $E_{x,i}  \big[  \tau  \big]<\infty$.
\end{proof}

Borrowing and adjusting the Lyapunov function defined in \cite{KZY07, YZ07}, we give conditions for 0 and 1 to be stochastically stable or unstable. Take $0<\alpha<1$, and constants $c_1$ and $c_2$. These constants are not unique and will be taken accordingly in order establish stability or instability of a particular vertex. Define the four positive Lyapunov functions as $V_0^{\pm }(x,i)=\big(1\mp\alpha c_i \big) x^{\pm\alpha} $, and $V_1^{\pm}(x,i)=\big(1\pm\alpha c_i \big) \big(1-x \big)^{\pm\alpha} $. Note that for each $i$, $\displaystyle \lim_{ x \searrow 0} V_0^{+}(x,i)=0=\lim_{ x \nearrow 1} V_1^{+}(x,i)$, and  $\displaystyle \lim_{ x \searrow 0} V_0^{-}(x,i)=\infty=\lim_{ x \nearrow 1} V_1^{-}(x,i)$. These limits in conjunction with the conditions for the dynamic to be a supermartingale will tell us whether our process is stable or unstable for the respective vertex. Conditions for the dynamic to be a supermartingale are given below. 

Applying the infinitesimal generator to each function, for $i=1,2$, and take $j:=3-i$, we have
$$
LV_0^{+}(x,i) = \alpha \big( 1- \alpha c_i \big) x^{\alpha}\Bigg\{  \big(1-x\big) \Big[ -B_i + ( A_i+B_i)x \Big]  + \frac{ \alpha -1}{2} \sigma_i^2 (1-x)^2 + q_{ij} \frac{ c_i -c_j }{ 1- \alpha c_i   }  \Bigg\},
$$
$$
LV_0^{-}(x,i) =  -\alpha \big( 1+ \alpha c_i \big) x^{-\alpha}\Bigg\{  \big(1-x \big) \Big[ -B_i + ( A_i+B_i)x \Big] + \frac{ -\alpha -1}{2} \sigma_i^2 (1-x)^2 + q_{ij} \frac{ c_i -c_j }{ 1+ \alpha c_i   }  \Bigg\},
$$
$$
LV_1^{+}(x,i) =  -\alpha \big( 1+ \alpha c_i \big) \big( 1-x \big)^{\alpha}\Bigg\{  x \Big[ -B_i + ( A_i+B_i)x \Big] + \frac{ -\alpha + 1}{2} \sigma_i^2 x^2 + q_{ij} \frac{ c_i -c_j }{ 1+ \alpha c_i   }  \Bigg\},
$$
and
$$
LV_1^{-}(x,i) =  \alpha \big( 1- \alpha c_i \big) \big( 1-x \big)^{-\alpha}\Bigg\{  x\Big[ -B_i + ( A_i+B_i)x \Big] + \frac{ \alpha +1}{2} \sigma_i^2 x^2 + q_{ij} \frac{ c_i -c_j }{ 1+ \alpha c_i   }  \Bigg\}.
$$
 If there exist a neighborhood (within the simplex) around 0 or 1 such that $LV_0^{\pm}(x,i) \leq 0$ or $LV_1^{\pm}(x,i) \leq 0$ for $x$ in the respective neighborhoods, then $V\big( S(t) \big)$ is a supermartingale in these neighborhoods. To accomplish this goal, we find conditions for the functions defined in the curly brackets above to be the appropriate sign at either $x=0$ or $x=1$. Plugging in $x=0$ or $x=1$ as appropriate yields the following assumptions:

\begin{ass} \label{ass}
Assume there exists $0<\alpha<1$, $c_1$, and $c_2$, where one of the following inequalities holds for each $i=1,2$, and $j :=3- i$:
\begin{enumerate}[(i)]
\item $1- \alpha c_i >0$ and $-B_i +\frac{\alpha -1}{2} \sigma_i^2 + q_{ij} \frac{ c_i -c_j  }{ 1- \alpha c_i } < 0$;
\item $1+ \alpha c_i >0$ and $-B_i +\frac{-\alpha -1}{2} \sigma_i^2 + q_{ij} \frac{ c_i -c_j  }{ 1+ \alpha c_i } > 0$;
\item $1+\alpha c_i >0$ and  $A_i +\frac{-\alpha + 1}{2} \sigma_i^2 + q_{ij} \frac{ c_i -c_j  }{ 1+ \alpha c_i } > 0$;
\item $1- \alpha c_i >0$ and  $A_i +\frac{\alpha + 1}{2} \sigma_i^2 + q_{ij} \frac{ c_i -c_j  }{ 1- \alpha c_i } < 0$.
\end{enumerate}
\end{ass}

\bigskip
Notice that these assumptions are an extension to the inequalities derived by Fudenberg and Harris \cite{FH92}. Furthermore, note that the term $\displaystyle \frac{|\alpha|}{2} \sigma_i^2$ is negligible.

In order  to utilize the assumptions to prove stability/instability properties of the switched replicator dynamic for each vertex, for any sufficiently small $\epsilon$ and $\delta$, any initial condition $\epsilon<x< \delta$ and $i\in\{1,2\}$, or $1-\delta< x < 1-\epsilon$ and $i\in\{1,2\}$, the exit time out of the interval $(\epsilon,\delta)$, or $(1-\delta,1-\epsilon)$, has to be almost surely finite. Proposition \ref{leave} gives us this characteristic.

We are now ready to state a theorem about the stability or instability of the vertices. The argument is very similar to the proofs in Theorems 5.3.1, 5.4.1, and 5.4.2 in \cite{RK80}. Since the proofs of these theorems only utilize the assumptions of the Markov process being continuous and the ability to apply the strong Markov property, one may see that there is natural extension to our theorem. For brevity, we omit the proof.

\bigskip
\begin{thm}\label{thm_main}
 For the dynamic defined by Equation \eqref{M}, if: 
\begin{enumerate}[(1)]
\item Assumption \ref{ass}(i) holds, then $x=0$ is asymptotically stable in probability;
\item Assumption \ref{ass}(ii) holds, then $x=0$ is unstable in probability;
\item Assumption \ref{ass}(iii) holds, then $x=1$ is asymptotically stable in probability;
\item Assumption \ref{ass}(iv) holds, then $x=1$ is unstable in probability.
\end{enumerate}
\end{thm}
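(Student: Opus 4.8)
The plan is to run Khasminskii's stochastic Lyapunov argument (Theorems~5.3.1, 5.4.1 and~5.4.2 of \cite{RK80}) for the switching diffusion $\big(S(t),r(t)\big)$, one case at a time, reducing cases~(3) and~(4) to cases~(1) and~(2) via the substitution $x\mapsto 1-x$: this map interchanges the two vertices, carries $V_0^{\pm}(\cdot,i)$ to $V_1^{\pm}(\cdot,i)$ and swaps the roles of $A_i$ and $B_i$ --- which is exactly why Assumption~\ref{ass} records the four inequalities in the stated form --- so I would only spell out (1) and (2). For (1), assume Assumption~\ref{ass}(i). In the displayed formula for $LV_0^{+}(x,i)$ the prefactor $\alpha(1-\alpha c_i)x^{\alpha}$ is nonnegative on $[0,1]$ (using $1-\alpha c_i>0$), while the curly bracket is continuous in $x$ and at $x=0$ equals the strictly negative number $-B_i+\tfrac{\alpha-1}{2}\sigma_i^2+q_{ij}\tfrac{c_i-c_j}{1-\alpha c_i}$; hence there is $\delta\in(0,1)$ with $LV_0^{+}(x,i)\le 0$ for all $x\in[0,\delta]$ and $i\in\{1,2\}$. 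Since $V_0^{+}\ge 0$, Dynkin's formula --- applied as in the proof of Proposition~\ref{leave}, now carrying the jump term $q_{ij}V(x,j)-q_{ij}V(x,i)$ --- shows that $V_0^{+}\big(S(t\wedge\tau),r(t\wedge\tau)\big)$ is a nonnegative supermartingale whenever $\tau$ does not exceed the exit time of $S$ from $(0,\delta)$.

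From here I would obtain both parts of ``asymptotically stable in probability'' in the usual way. Fix $0<\epsilon<\delta$, an initial $0<x<\epsilon$ and $i\in\{1,2\}$, and set $\tau_\epsilon=\inf\{t\ge 0: S(t)=\epsilon\}$; by Remark~\ref{re1} and continuity of the sample paths, $\{\sup_{t\ge0}S(t)>\epsilon\}\subseteq\{\tau_\epsilon<\infty\}$, and on $\{\tau_\epsilon\le t\}$ one has $V_0^{+}\big(S(t\wedge\tau_\epsilon),r(t\wedge\tau_\epsilon)\big)\ge m_\epsilon:=\min_i(1-\alpha c_i)\epsilon^{\alpha}>0$. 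The supermartingale bound gives $m_\epsilon P_{x,i}(\tau_\epsilon\le t)\le V_0^{+}(x,i)=(1-\alpha c_i)x^{\alpha}$, and letting $t\to\infty$ yields $P_{x,i}\big(\sup_{t\ge0}S(t)>\epsilon\big)\le(1-\alpha c_i)x^{\alpha}/m_\epsilon\to 0$ as $x\to0$, i.e.\ stability in probability. On $\{\tau_\epsilon=\infty\}$ the stopped supermartingale equals $V_0^{+}\big(S(t),r(t)\big)$, hence converges a.s.\ to some $L\in[0,\infty)$. Here Proposition~\ref{leave} does the real work: for every $\eta\in(0,\epsilon)$ and $k\in\mathbb{N}$, the event that $S$ stays in $(\eta,\epsilon)\subset(\omega_0,\omega_1)$ throughout $[k,\infty)$ has probability zero, by the Markov property at time $k$ together with the a.s.\ finiteness of the exit time from $(\eta,\epsilon)$; summing over $\eta\in\mathbb{Q}$ and $k$ shows $\liminf_{t\to\infty}S(t)=0$ a.s.\ on $\{\tau_\epsilon=\infty\}$. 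Evaluating $V_0^{+}$ along a subsequence on which $S\to0$, and using that $1-\alpha c_{r(t)}$ is bounded, forces $L=0$; since moreover $1-\alpha c_{r(t)}\ge\min_i(1-\alpha c_i)>0$, it follows that $S(t)^{\alpha}=V_0^{+}(S(t),r(t))/(1-\alpha c_{r(t)})\to0$, so $S(t)\to0$ on $\{\tau_\epsilon=\infty\}$. Since $P_{x,i}(\tau_\epsilon=\infty)\ge 1-(1-\alpha c_i)x^{\alpha}/m_\epsilon\to 1$ as $x\to0$, this gives $\lim_{x\to0}P_{x,i}\big(\lim_t S(t)=0\big)=1$.

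For instability (case~(2)) I would use $V_0^{-}(x,i)=(1+\alpha c_i)x^{-\alpha}$, which blows up at $x=0$. Under Assumption~\ref{ass}(ii) the curly bracket in $LV_0^{-}(x,i)$ is positive at $x=0$ and the prefactor $-\alpha(1+\alpha c_i)x^{-\alpha}$ is negative for $x>0$, so again there is $\delta\in(0,1)$ with $LV_0^{-}(x,i)\le 0$ on $(0,\delta]\times\{1,2\}$. For $0<\epsilon<x<\delta$, stopping at $\tau_\epsilon\wedge\tau_\delta$ (the exit time of $S$ from $(\epsilon,\delta)$) makes $V_0^{-}$ a nonnegative supermartingale, and the same estimate yields $M_\epsilon P_{x,i}(\tau_\epsilon<\tau_\delta)\le V_0^{-}(x,i)=(1+\alpha c_i)x^{-\alpha}$ with $M_\epsilon:=\min_i(1+\alpha c_i)\epsilon^{-\alpha}\to\infty$; letting $\epsilon\to0$, using Proposition~\ref{leave} (so $P_{x,i}(\tau_\epsilon=\tau_\delta)=0$) and Remark~\ref{re1} (so $\{\tau_\delta<\tau_\epsilon\}\uparrow\{\tau_\delta<\infty\}$), gives $P_{x,i}(\tau_\delta<\infty)=1$ for every $x\in(0,\delta)$. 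Thus from any start near $0$ the process reaches level $\delta$ almost surely, which contradicts stability in probability; this is the meaning of ``unstable in probability'' in Khasminskii's Theorem~5.4.1. Cases~(3) and~(4) follow by the reflection $x\mapsto1-x$.

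The steps that are genuinely not the verbatim Khasminskii estimate are two. First, justifying Dynkin's formula and the supermartingale property for the switching diffusion with the jump term present, which I would handle by localizing to a closed subinterval of $(0,1)$ on which the coefficients are smooth and bounded and copying the computation already carried out for Proposition~\ref{leave}. Second --- and this is the main obstacle --- ruling out that the Markovian switching pins a non-negligible set of sample paths inside an annulus bounded away from the relevant vertex, i.e.\ the identity $\liminf_t S(t)=0$ on the non-exit event. Proposition~\ref{leave} (finite expected, hence a.s.\ finite, exit time from any such annulus, with the argument insensitive to the number of states of $r(t)$) combined with the Markov property is precisely what resolves this point.
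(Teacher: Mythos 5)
Your proposal is correct and is essentially the paper's own (omitted) argument: the paper defers the proof of Theorem \ref{thm_main} to Khasminskii's Theorems 5.3.1, 5.4.1 and 5.4.2 in \cite{RK80}, applied with the stated Lyapunov functions $V_0^{\pm}$ and $V_1^{\pm}$, which is exactly the supermartingale estimate you carry out, with Proposition \ref{leave} and Remark \ref{re1} supplying the finite exit times and regularity needed to rule out paths being pinned away from the vertices by the switching. The reduction of cases (3) and (4) to (1) and (2) via $x\mapsto 1-x$ (with $c_i\mapsto -c_i$) is consistent with the form of Assumption \ref{ass}, so no gap remains.
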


\bigskip
Taking the conditions in Theorem \ref{thm_main}  we establish the long-run behavior of the switched stochastic replicator dynamic. For example, if Assumption \ref{ass}(i) (the point 0 is  asymptotically stable) and  Assumption \ref{ass}(iv) (the point 1 is unstable) hold then the our dynamic will converge to 0 almost surely, which is the strategy 2  dominant game. This is Proposition \ref{0as}. However, for the transient/recurrent properties to hold, the dynamic needs to be regular. One may see that this property does holds by Remark \ref{re1}.

Since the process evolves on the unit interval, if the dynamics that are switched all have the same unstable vertex, then this vertex will also be unstable for the switched dynamic. This can be shown by setting $c_1=c_2$ for the appropriate $V_{\cdot}^{-}$ function. The same reasoning my be applied to a stable vertex.

For the proofs of the following propositions, we define the stopping times: $\tau_{\epsilon}$ as the first time the process leaves the interval $(0,\epsilon)$; and $\tilde{\tau}_{\epsilon}$ as the first time the process leaves the interval $(1-\epsilon,1)$. Furthermore, we assume that the initial conditions lie in $(0,1) \times \{1,2\}$.

\bigskip
\begin{prop}[Strategy 2 Dominant]\label{0as}
 For the dynamic defined by Equation \eqref{M}, if Assumptions \ref{ass}(i) and \ref{ass}(iv) hold, then for any initial condition, $S(t)$ converges to 0 almost surely.  
\end{prop}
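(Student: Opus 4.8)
The plan is to combine the two consequences of Theorem~\ref{thm_main} that hold here: under Assumption~\ref{ass}(i), $x=0$ is asymptotically stable in probability, so for every $\eta>0$ there is a $\delta_\eta>0$ with $P_{y,i}\big(\lim_{t\to\infty}S(t)=0\big)\ge 1-\eta$ for all $y\le\delta_\eta$ and $i\in\{1,2\}$; and under Assumption~\ref{ass}(iv), $x=1$ is unstable in probability, which --- via the supermartingale property of $V_1^-\big(S(\cdot\wedge\tilde{\tau}_{\rho}),r(\cdot\wedge\tilde{\tau}_{\rho})\big)$ on a left-neighbourhood of $1$, the blow-up $V_1^-(x,i)\to\infty$ as $x\nearrow 1$, and Proposition~\ref{leave} --- yields a level $1-\rho$ (with $\rho$ as small as we wish) such that $P_{x,i}(\tilde{\tau}_{\rho}<\infty)=1$ for every $x\in(1-\rho,1)$ and $i\in\{1,2\}$, and by path-continuity and regularity (Remark~\ref{re1}) the process is then at level $1-\rho$ at time $\tilde{\tau}_{\rho}$. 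Intuitively, the process cannot get stuck in the interior (Proposition~\ref{leave}), cannot get stuck near $1$ (instability), and once near $0$ it converges there with high probability (asymptotic stability); a Borel--Cantelli bootstrap then upgrades ``high probability'' to ``probability one''.

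The key step is the recurrence claim: for every $(x_0,i_0)\in(0,1)\times\{1,2\}$ and every small $\delta>0$, the hitting time $\theta_\delta:=\inf\{t\ge 0:S(t)\le\delta\}$ is a.s.\ finite. I would prove this by a geometric-trials scheme. Fix $\delta<1-\rho$. If $x_0>1-\rho$, first bring the process to level $1-\rho$ (a.s.\ finite, by the instability step above). From any point of $[\delta,1-\rho]$, Proposition~\ref{leave} applied to $(\delta,1-\rho/2)$ gives an a.s.\ finite exit time, and by path-continuity the exit is either at $\delta$ --- then $\theta_\delta<\infty$ and we are done --- or at $1-\rho/2$, from which the instability step returns the process a.s.\ to level $1-\rho$, starting a new round. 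A round initiated at level $1-\rho$ reaches $\delta$ with probability at least
\[
p_*:=\inf_{i\in\{1,2\}}P_{1-\rho,\,i}\big(S(\cdot)\ \text{exits}\ (\delta,1-\rho/2)\ \text{at}\ \delta\big)>0,
\]
the positivity holding because the diffusion coefficient $x(1-x)\sigma_i$ does not vanish on $(0,1)$, so from the interior starting point $1-\rho$ the process reaches either endpoint with positive probability (and there are only two driving states). By the strong Markov property these rounds are successive trials, each succeeding with probability at least $p_*$, so the probability that the first $n$ rounds initiated at $1-\rho$ all fail is at most $(1-p_*)^n\to 0$; since on $\{\theta_\delta=\infty\}$ every such round is entered and fails, that event is null.

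Granting the claim, fix $\eta>0$ and set $\delta=\delta_\eta$; then $\theta_\delta<\infty$ a.s.\ and $S(\theta_\delta)\le\delta$ by path-continuity, so the strong Markov property at $\theta_\delta$ gives
\[
P_{x_0,i_0}\big(\lim_{t\to\infty}S(t)=0\big)=E_{x_0,i_0}\Big[P_{S(\theta_\delta),\,r(\theta_\delta)}\big(\lim_{t\to\infty}S(t)=0\big)\Big]\ \ge\ \inf_{y\le\delta,\,i\in\{1,2\}}P_{y,i}\big(\lim_{t\to\infty}S(t)=0\big)\ \ge\ 1-\eta,
\]
and letting $\eta\downarrow 0$ yields $P_{x_0,i_0}\big(\lim_{t\to\infty}S(t)=0\big)=1$ for every initial condition. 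I expect the main obstacle to be the bookkeeping in the recurrence claim --- choosing the nested exit intervals and stopping times so that the geometric estimate is genuinely clean, and justifying the uniform lower bound $p_*>0$ for the switched diffusion --- while the reduction of a.s.\ convergence to that claim is routine.
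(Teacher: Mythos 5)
Your argument is correct (at the paper's level of rigor), but it takes a genuinely different route from the paper. The paper gets the result by an appeal to general Markov-process theory: from Assumption~\ref{ass}(iv) it extracts a finite-mean exit time from a left neighbourhood of $1$, from Assumption~\ref{ass}(i) it extracts a positive probability, starting near $0$, of never leaving $(0,\epsilon)$ (so $P_{x,i}(\tau_\epsilon=\infty)>0$), and then it invokes Theorem~3 of Meyn and Tweedie \cite{MT93_1} with the irreducibility measure $\Phi(A)=M\bigl(A\cap(\epsilon_0,1-\delta)\bigr)$ to conclude the process is transient, from which almost-sure convergence to $0$ is read off. You avoid the $\psi$-irreducibility/transience machinery altogether: your geometric-trials scheme (Proposition~\ref{leave} for the interior, the Dynkin negative-drift exit estimate near $1$ --- which is exactly the paper's $E_{x,i}[\tilde{\tau}_\delta]<\infty$ step, with Remark~\ref{re1} ruling out exit at $1$ --- and the strong Markov property) shows the level $\delta_\eta$ is hit almost surely, and then asymptotic stability at $0$ plus $\eta\downarrow 0$ finishes. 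This is more elementary and self-contained, and it makes explicit the passage from the boundary estimates to almost-sure convergence, a step the paper's proof leaves largely implicit behind the word ``transient.'' The one ingredient you must still supply is the uniform bound $p_*>0$; the clean way is the one you gesture at: condition on no switch of $r(t)$ during a fixed window $[0,T]$ (probability $e^{-q_{ij}T}>0$, independent of the Brownian motion), and use that the frozen Fudenberg--Harris diffusion, whose diffusion coefficient is bounded away from zero on $[\delta,1-\rho/2]$, exits that interval at $\delta$ within time $T$ with positive probability; taking the minimum over the finitely many states gives $p_*>0$, and your renewal bound $(1-p_*)^n$ then closes the argument.
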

\begin{proof}
By Assumption \ref{ass}(iv), there exists a $\delta>0$ such that, for $x\in (1-\delta, 1)$ and an $\alpha>0$, $LV(x,i) \leq -\alpha$, for $i\in\{1,2\}$. This tells us that $E_{x,i}[ \tilde{\tau}_{\delta}]<\infty$. Also, by Assumption \ref{ass}(i), there exists an $\epsilon>0$ such that for $x\in(0,\epsilon)$ and $i\in\{1,2\}$, we have $\displaystyle P_{x,i}\Big( \lim_{t\to\infty} S(t)=0 \Big)>0$.

We now show that the process is transient, which implies that $S(t)$ converges to 0 almost surely.  To accomplish this, we apply Theorem 3 in Myen and Tweedie \cite{MT93_1}. Take a constant $\epsilon_0$, where $0<\epsilon_0< \epsilon$, and for the Lebesgue measure $M$ and $A \in \mathfrak{B}\Big( (0,1) \Big)$, define $\Phi\big( A \big) = M\Big( A \cap (\epsilon_0,1-\delta) \Big)$. We have that the measure $\Phi$ is an irreducible measure (see  \cite{DMT95, MT93,MT93_1} for further information). For $x\in (\epsilon_0, \epsilon)$, notice that $P_{x,i}\big( \tau_{\epsilon}=\infty \big)>0$.  Therefore, the assumptions of Theorem 3 hold, and hence our dynamic is transient.
\end{proof}

\bigskip
Notice that the above proof utilizes the characteristic of the process near each stationary point to show almost sure convergence to the point 0. For the case when 0 is unstable and 1 is asymptotically stable, the argument to show that the dynamic converges to 1 almost sure is identical. This gives us the following corollary.

\bigskip
\begin{cor}[Strategy 1 Dominant]
If Assumptions \ref{ass}(ii) and \ref{ass}(iii) hold, then for any initial condition, S(t) converges to 1 almost surely. 
\end{cor}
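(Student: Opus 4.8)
The plan is to transcribe the proof of Proposition \ref{0as} with the two vertices interchanged. First I would extract from Assumption \ref{ass}(ii) a finite-exit-time estimate near $0$: since $1+\alpha c_i>0$ and the curly-bracket expression in $LV_0^{-}(x,i)$ evaluated at $x=0$ equals $-B_i+\frac{-\alpha-1}{2}\sigma_i^2+q_{ij}\frac{c_i-c_j}{1+\alpha c_i}>0$, continuity gives an $\epsilon>0$ and a constant $\kappa>0$ with $LV_0^{-}(x,i)\le-\kappa$ for all $x\in(0,\epsilon)$ and $i\in\{1,2\}$; Dynkin's formula together with the monotone convergence theorem then yields $E_{x,i}[\tau_\epsilon]<\infty$, so almost every path leaves $(0,\epsilon)$. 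Symmetrically, Assumption \ref{ass}(iii) makes $x=1$ asymptotically stable in probability by Theorem \ref{thm_main}(3), so there is a $\delta>0$ with $P_{x,i}\big(\lim_{t\to\infty}S(t)=1\big)>0$ for all $x\in(1-\delta,1)$ and $i\in\{1,2\}$, while Assumption \ref{ass}(ii) makes $x=0$ unstable in probability by Theorem \ref{thm_main}(2).

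Next I would show that $S(t)$ is transient (toward $1$) by invoking Theorem 3 of Meyn and Tweedie \cite{MT93_1}, exactly as in Proposition \ref{0as} but recentring the argument at the opposite vertex. Pick $\delta_0$ with $0<\delta_0<\delta$, and for the Lebesgue measure $M$ and $A\in\mathfrak{B}\big((0,1)\big)$ set $\Phi(A)=M\big(A\cap(\epsilon,1-\delta_0)\big)$, which is an irreducibility measure for the switched chain. For $x\in(1-\delta,1-\delta_0)$ one has $P_{x,i}(\tilde{\tau}_{\delta}=\infty)>0$, since with positive probability the path converges to $1$ without ever re-entering $(0,1-\delta)$; this verifies the hypothesis of Theorem 3, so $S(t)$ is transient. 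Combining transience with the finite exit time near $0$ and the instability of that vertex, $1$ is the only candidate accumulation point, so $S(t)\to1$ almost surely for every initial condition in $(0,1)\times\{1,2\}$; the regularity needed for the transience/recurrence dichotomy is supplied by Remark \ref{re1}.

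I expect the only delicate point, as in Proposition \ref{0as}, to be the verification that the Meyn--Tweedie machinery applies to the \emph{joint} process $S(t)=(s(t),r(t))$: one must confirm that $\Phi$ is a genuine irreducibility measure for the pair and that the no-return estimate $P_{x,i}(\tilde{\tau}_{\delta}=\infty)>0$ holds uniformly in the environment state $i$. This uniformity is exactly what forces Assumptions \ref{ass}(ii)--(iii) to be imposed for both $i=1$ and $i=2$, and Proposition \ref{leave} rules out the switching pinning a non-negligible family of paths in any compact subinterval of $(0,1)$; granting these, the remainder is a routine mirror image of the strategy-$2$-dominant argument.
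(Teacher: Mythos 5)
Your proposal is correct and is essentially the paper's own argument: the paper proves this corollary simply by noting that the proof of Proposition \ref{0as} goes through verbatim with the roles of the vertices $0$ and $1$ interchanged, which is exactly the mirrored construction you carry out (finite exit time from a neighborhood of the unstable vertex via the $V_0^{-}$ supermartingale estimate, positive probability of convergence near the stable vertex from Theorem \ref{thm_main}, and transience via Theorem 3 of Meyn and Tweedie with the mirrored irreducibility measure). Your added detail on the Dynkin step and the no-return estimate only fills in what the paper leaves implicit.
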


\bigskip
The next proposition shows that when both stationary points are asymptotically stable in probability, the process is transient and will converge to either endpoint almost surely. Due to the complexity of the process, given an initial condition, we are unable to derive probabilities of converging to 0 and converging to 1. When there is no switching between processes, the probabilities are explicit and given in Fudenberg and Harris \cite{FH92}.

\bigskip
\begin{prop}[Coordination Game]
 For the dynamic defined by Equation \eqref{M}, if Assumptions \ref{ass}(i) and \ref{ass}(iii) hold, then S(t) is transient and converges to 0 or 1 almost surely.  
\end{prop}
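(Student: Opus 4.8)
The plan is to mirror the proof of Proposition~\ref{0as}: first deduce that $S(t)$ is transient from the transience criterion of Meyn and Tweedie~\cite{MT93_1}, and then upgrade transience to convergence to an endpoint by exploiting path continuity together with Proposition~\ref{leave}. The essential difference from Proposition~\ref{0as} is that now \emph{both} vertices are attracting, so no vertex can be ruled out and the conclusion is convergence to $0$ or to $1$.

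First I would record the local behaviour at each vertex. By Assumption~\ref{ass}(i) and Theorem~\ref{thm_main}(1), the point $0$ is asymptotically stable in probability, so there is an $\epsilon>0$ such that $P_{x,i}\big(\lim_{t\to\infty}S(t)=0\big)>0$ for all $x\in(0,\epsilon)$ and $i\in\{1,2\}$; and since $0$ is in particular stable in probability, for $x$ close enough to $0$ the path stays inside $(0,\epsilon)$ with positive probability, i.e.\ $P_{x,i}(\tau_\epsilon=\infty)>0$. Symmetrically, Assumption~\ref{ass}(iii) and Theorem~\ref{thm_main}(3) yield a $\delta>0$ with $P_{x,i}\big(\lim_{t\to\infty}S(t)=1\big)>0$ for $x\in(1-\delta,1)$ and $P_{x,i}(\tilde{\tau}_\delta=\infty)>0$ for $x$ close enough to $1$.

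Next I would establish transience exactly as in Proposition~\ref{0as}. Choose constants $0<\epsilon_0<\epsilon$ and $0<\delta_0<\delta$, and for the Lebesgue measure $M$ and $A\in\mathfrak{B}\big((0,1)\big)$ define the irreducible measure $\Phi(A)=M\big(A\cap(\epsilon_0,1-\delta_0)\big)$ (see~\cite{DMT95,MT93,MT93_1}). For $x\in(\epsilon_0,\epsilon)$ we have $P_{x,i}(\tau_\epsilon=\infty)>0$ --- equivalently, with positive probability the process stays in $(0,\epsilon)$ and hence visits the support of $\Phi$ only on a finite horizon --- so the hypotheses of Theorem~3 of~\cite{MT93_1} are satisfied and $S(t)$ is transient. (One could equally run this step near $1$ using $P_{x,i}(\tilde{\tau}_\delta=\infty)>0$.) Then, to deduce convergence, fix $\eta\in(0,\min\{\epsilon,\delta\})$. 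Transience forces the process to visit the compact set $[\eta,1-\eta]$ only up to a finite random time, and Proposition~\ref{leave} rules out the path becoming trapped in any compact subinterval of $(0,1)$; hence $P_{x,i}$-a.s.\ there is a random time $T$ with $S(t)\in(0,\eta)\cup(1-\eta,1)$ for all $t\ge T$. Because the sample paths of $S(t)$ are continuous and stay in $(0,1)$ by Remark~\ref{re1}, and because $(0,\eta)$ and $(1-\eta,1)$ are disjoint, the path must lie in a single one of these two intervals for all $t\ge T$. Applying this along a sequence $\eta_k\downarrow 0$ and using that the intervals are nested, a.s.\ the path is eventually contained in $(0,\eta_k)$ for every $k$, or eventually contained in $(1-\eta_k,1)$ for every $k$; that is, $S(t)\to 0$ or $S(t)\to 1$ almost surely.

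I expect the main obstacle to be the careful verification of the hypotheses of the Meyn--Tweedie transience theorem for the switched process --- $\Phi$-irreducibility, the petiteness of the relevant compact sets, and phrasing the ``trapped near a vertex with positive probability'' statement in the exact form the theorem requires --- together with justifying rigorously that transience (plus Proposition~\ref{leave} and regularity from Remark~\ref{re1}) yields a finite last exit time from $[\eta,1-\eta]$. Once that is in place, the continuity-and-disconnectedness argument producing convergence to a vertex is routine.
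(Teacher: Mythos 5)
Your route differs from the paper's: the paper does not go through the Meyn--Tweedie transience machinery at all for this proposition. It uses the full strength of asymptotic stability in probability at \emph{both} vertices: for arbitrary $\epsilon>0$ choose $\delta>0$ so that $P_{x,i}(\lim_{t\to\infty}S(t)=0)\geq 1-\epsilon$ for $x<\delta$ and $P_{x,i}(\lim_{t\to\infty}S(t)=1)\geq 1-\epsilon$ for $x>1-\delta$; Proposition \ref{leave} makes the exit time $\hat{\tau}_\delta$ from $(\delta,1-\delta)$ a.s. finite, and the strong Markov property applied at $\hat{\tau}_\delta$ gives $P_{x,i}(F)\geq 1-\epsilon$ for $F=\{\lim_t S(t)\in\{0,1\}\}$; letting $\epsilon\to 0$ gives a.s. convergence, and transience is then immediate from the paper's definition (a process converging to the vertices cannot be recurrent with respect to interior sets). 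Note that your proposal only ever uses the \emph{positive-probability} consequence of Assumptions \ref{ass}(i) and \ref{ass}(iii), not the ``probability tending to one'' statement that drives the paper's short argument.

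Because of that, the weight of your proof falls on the step ``transience forces the process to visit $[\eta,1-\eta]$ only up to a finite random time,'' and this is a genuine gap, not a routine verification. Under the paper's own definition, transient merely means not recurrent, which does not imply a finite last exit time from compacts; and even in the Meyn--Tweedie framework, transience of a $\Phi$-irreducible T-process yields uniformly transient compact sets, i.e.\ finite (expected) occupation time, which still has to be converted into an a.s.\ finite last exit time by an additional strong-Markov argument (e.g., each entry into $[\eta,1-\eta]$ contributes, with uniformly positive probability, a uniformly positive amount of occupation time of a slightly larger compact interval, so infinitely many visits would force infinite occupation time). Proposition \ref{leave} does not close this hole: finite expected exit time from $(\eta,1-\eta)$ is compatible with infinitely many returns. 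Without that missing argument your chain ``transient $\Rightarrow$ eventually trapped in $(0,\eta)\cup(1-\eta,1)$ $\Rightarrow$ converges to $0$ or $1$'' does not go through, whereas your final continuity-and-nesting step and the transience step itself (which mirrors the paper's Proposition \ref{0as}) are fine. The repair is either to supply the occupation-time argument, or simply to switch to the paper's direct proof, which is both shorter and delivers transience as a byproduct of the a.s. convergence.
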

\begin{proof}
Take an arbitrarily small $\epsilon>0$. By Assumptions \ref{ass}(i) and \ref{ass}(iii), there exists a $\delta>0$ such that $\displaystyle  P_{x,i}  \Big( \lim_{t\to\infty} S(t)=0  \Big) \geq 1- \epsilon$ for $x < \delta$, or  $\displaystyle  P_{x,i}  \Big( \lim_{t\to\infty} S(t)=1 \Big) \geq 1- \epsilon$ for $x >1-\delta$. Define $\displaystyle F= \Big\{ \lim_{t\to\infty} S(t)=1 \ \mbox{or} \ 0 \Big\}$, and $\hat{\tau}_{\delta}$ as the first time the process leaves the interval $(\delta, 1-\delta)$.

\medskip
Proposition \ref{leave} tells us that $\displaystyle  \hat{\tau}_{\delta}$ has finite mean. Therefore, the strong Markov property yields $\displaystyle  P_{x,i}  \big(F \big) =  E_{x,i}  \Big[ E_{S (\hat{\tau}_{\epsilon} ) } [ I_{F} ] \Big] \geq 1 - \epsilon.$ Since $\epsilon$ was arbitrary, we are able to conclude the statement.
\end{proof}

\bigskip
The proposition below gives conditions for the dynamic to be positive recurrent. For example, if the dynamic switches between two ``stochastic" mixed strategy dominant games, taking $c_1=c_2$ for  $V_0^-$ and $V_1^-$ tells us that the condition holds. 

\bigskip
\begin{prop}[Mixed Strategy Dominant]
 For the dynamic defined by Equation \eqref{M}, if Assumptions \ref{ass}(ii) and \ref{ass}(iv) hold, then S(t) is positive recurrent.  
\end{prop}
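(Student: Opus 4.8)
The plan is to exhibit a set $U=D\times\{1,2\}$ with $D$ compactly contained in $(0,1)$ such that $S(t)$ is recurrent with respect to $U$; by the definition of positive recurrence this is all that is needed, and regularity is immediate from Remark \ref{re1} (the points $0$ and $1$ are absorbing, so a path touching $\{0,1\}$ before a finite time $T$ would have $S(T)\in\{0,1\}$, an event of probability $0$). First I would pin down the neighborhoods, using the two instability hypotheses exactly as in the proof of Proposition \ref{0as}. Let $(\alpha,c_1,c_2)$ witness Assumption \ref{ass}(ii) and $(\widetilde\alpha,\widetilde c_1,\widetilde c_2)$ witness Assumption \ref{ass}(iv) (these need not agree, since $V_0^{-}$ and $V_1^{-}$ are handled independently). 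Because $1+\alpha c_i>0$ and the curly bracket appearing in $LV_0^{-}(x,i)$ is continuous in $x$ and strictly positive at $x=0$ by \ref{ass}(ii), there are $\eta\in(0,1/2)$ and $\kappa>0$ with $LV_0^{-}(x,i)\le-\kappa$ for all $x\in(0,\eta)$, $i\in\{1,2\}$; shrinking $\eta$ if necessary, the symmetric reasoning applied to the curly bracket of $LV_1^{-}(x,i)$ together with \ref{ass}(iv) gives a $\widetilde\kappa>0$ with $LV_1^{-}(x,i)\le-\widetilde\kappa$ for all $x\in(1-\eta,1)$, $i\in\{1,2\}$. Note that the switching term $q_{ij}\big(V(x,j)-V(x,i)\big)$ is already built into $L$, so these bounds are for the full switched dynamic. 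Set $D=[\eta,1-\eta]\subset(\eta/2,\,1-\eta/2)=:(\omega_0,\omega_1)$ with $0<\omega_0<\omega_1<1$, and $U=D\times\{1,2\}$.

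Next I would run the escape estimates. Recall $\tau_\eta$ is the first exit time of $(0,\eta)$ and $\widetilde\tau_\eta$ the first exit time of $(1-\eta,1)$. Fix $x\in(0,\eta)$, $i\in\{1,2\}$. Since $V_0^{-}\ge 0$ and $LV_0^{-}\le-\kappa$ on $(0,\eta)$, Dynkin's formula for the stopped process (after a standard localization on the intervals $(1/m,\eta)$, on which $V_0^{-}$, $LV_0^{-}$ and the diffusion coefficient are bounded, so that the It\^o correction term is a genuine martingale, exactly as in the proof of Proposition \ref{leave}) gives
\begin{equation*}
0\ \le\ E_{x,i}\Big[V_0^{-}\big(S(\tau_\eta\wedge t),r(\tau_\eta\wedge t)\big)\Big]\ \le\ V_0^{-}(x,i)-\kappa\,E_{x,i}\big[\tau_\eta\wedge t\big],
\end{equation*}
hence $E_{x,i}[\tau_\eta\wedge t]\le V_0^{-}(x,i)/\kappa$, and letting $t\to\infty$ the monotone convergence theorem yields $E_{x,i}[\tau_\eta]<\infty$, so $\tau_\eta<\infty$ a.s. By path continuity and Remark \ref{re1} the process cannot reach $0$ at a finite time, so $S(\tau_\eta)=\eta$ a.s.; that is, from any starting point in $(0,\eta)$ the process enters $D$ in finite time a.s. The identical argument with $V_1^{-}$, $\widetilde\kappa$ and Assumption \ref{ass}(iv) shows $E_{x,i}[\widetilde\tau_\eta]<\infty$ and $S(\widetilde\tau_\eta)=1-\eta$ a.s., so from any starting point in $(1-\eta,1)$ the process also enters $D$ in finite time a.s.

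Assembling these, any initial condition in $D^{c}\times\{1,2\}=\big((0,\eta)\cup(1-\eta,1)\big)\times\{1,2\}$ falls into one of the two cases, so $P_{x,i}(\tau_U<\infty)=1$; together with regularity this shows $S(t)$ is recurrent with respect to $U$, and since $D\subset(\omega_0,\omega_1)$ with $0<\omega_0<\omega_1<1$, the process $S(t)$ is positive recurrent. I expect the only delicate point to be the localization justifying Dynkin's formula for $V_0^{-}$ and $V_1^{-}$, which blow up at the respective vertices: one applies the formula on the shrinking intervals $(1/m,\eta)$, obtains the uniform bound $E_{x,i}[\sigma_m\wedge t]\le V_0^{-}(x,i)/\kappa$ for the stopping times $\sigma_m$ of those intervals, and passes $m\to\infty$ using $\sigma_m\uparrow\tau_\eta$ --- which is where Remark \ref{re1} is essential, as it rules out the paths for which $\lim_m\sigma_m$ would be a finite hitting time of $0$. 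The $LV$-sign computations and the Proposition \ref{leave}-style Dynkin bounds are otherwise routine.
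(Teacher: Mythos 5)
Your proposal is correct and takes essentially the same approach as the paper: Assumptions \ref{ass}(ii) and \ref{ass}(iv) give a strictly negative bound on $LV_0^{-}$ near $0$ and on $LV_1^{-}$ near $1$, Dynkin-type estimates then give finite expected exit times from those boundary neighborhoods, and hence the process reaches a compact subinterval of $(0,1)$ in finite time, which is positive recurrence in the sense of the paper's definition. Your write-up is in fact more careful than the paper's (the localization justifying Dynkin's formula for the unbounded functions $V_0^{-}$, $V_1^{-}$, and the use of Remark \ref{re1} to guarantee exit at the interior endpoint), details the paper leaves implicit.
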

\begin{proof}
By Assumption \ref{ass}(iv), there exists a $\delta>0$  and $\alpha_1>0$ such that for any $x\in (1-\delta, 1)$ and $i\in\{1,2\}$, we have $LV(x,i)\leq -\alpha_1$. Thus $E_{x,i}[ \tilde{\tau}_{\delta}]<\infty$.  Moreover, Assumption \ref{ass}(ii) says there exists an $\epsilon>0$ and $\alpha_0>0$ such that for $x\in(0,\epsilon)$ and $i\in\{1,2\}$, we have $LV(x,i)\leq -\alpha_0$. Thus $E_{x,i}[ \tau_{\epsilon}]<\infty$. Therefore, the process hits the set $(\epsilon, 1-\delta)$ in finite time. The strong Markov property tells us that the process is positive recurrent in $(\epsilon, 1-\delta)$.  Proposition \ref{leave} yields that the process is positive recurrent for any strict subinterval of $(0,1)$.
\end{proof}

\bigskip
\begin{re}
When $q_{12} = q_{21}$, whichever process has the ``stronger'' stability or instability to the particular vertex will dictate the dynamic. To illustrate this, consider the case when $\displaystyle - B_1 + \frac{ -\alpha - 1}{2} \sigma_1^2>0$ and $\displaystyle - B_2 + \frac{ -\alpha - 1}{2} \sigma_2^2 < 0$, where $\alpha\ll 1$. Since $q_{12} = q_{21}$, to show that Assumption \ref{ass}(ii) may hold, we will subtract a positive constant from the first term, and add the same constant to the second term. If $\displaystyle - B_1 + \frac{ - \alpha - 1}{2} \sigma_1^2> \Big| - B_2 + \frac{ - \alpha - 1}{2} \sigma_2^2\Big|$ (which means the process for $r(t)=1$ has a ``stronger'' instability to $x=0$ than the process $r(t)=2$ has stability at $x=0$), then we may find a $k>0$ where   $\displaystyle - B_1 + \frac{ \alpha - 1}{2} \sigma_1^2 - k>0$ and $\displaystyle - B_2 + \frac{ \alpha - 1}{2} \sigma_2^2 +k > 0$, which tells us that $S(t)$ is unstable at $x=0$. 
\end{re}

%Example
\section{Example of Defectors}
Taylor et al \cite{TCI07} analyzed a prisoner's dilemma game between cooperators $C$ and defectors $D$, where defection was punished. Assuming a payoff $b>0$ for cooperation and a cost $c>0$, where $b>c$, the payoffs are
$
\begin{array}{c|cc}
 &C  & D\\ \hline
C & b \sm c & \sm c \\
D & b & 0   \\
\end{array}
$,
leaving the defection strategy the only dominant strategy. When punishment is taken into account, the payoffs are
$
\begin{array}{c|cc}
& C  &  D \\ \hline
C & b \sm c & \sm c \\ 
D & b \sm \frac{\gamma}{2}(b+c) & 0   \\
\end{array}
$,
where for a large enough $\gamma$, this is a coordination game. We consider the scenario where punishment for defection follows a stochastic environment.  Take the payoff matrices for the two environment as
$
\left(
\begin{array}{cc}
 1 & \sm 1   \\
2  &  0    \\  
\end{array}
\right)
$
and
$
\left(
\begin{array}{cc}
 1 &  \sm 1  \\
  .2 &  0    \\  
\end{array}
\right)
$,
and the variances values are $\sigma^2_{21}=.1=\sigma^2_{22}$, $\sigma^2_{11}=.15=\sigma^2_{12}$, and $\sigma^2_{1}=.5=\sigma^2_{2}$. We call the game without punishment state 1 and the game with punishment state 2. Since in both states the point $(0,1)$ (the population defecting) is stable, this is also true in the switched environment. This is Assumption \ref{ass}(i), which written out is
\begin{equation*}\begin{split}
\sm .9 & + .5\big( \alpha -1 \big)  + q_{12} \frac{ c_1 -c_2  }{1 - \alpha c_1} < 0  \\
\sm .9 & + .5 \big( \alpha -1 \big) + q_{21} \frac{ c_2 -c_1  }{1 - \alpha c_2} < 0  \\
\end{split}\end{equation*}  
where $0<\alpha<1$, $1 - \alpha c_1>0$, and $1 - \alpha c_2>0$. Taking $c_1=c_2$ gives us the inequalities. For the vertex $(1,0)$ (population of cooperators) to be stable, one would assume that the transition rate from state 2 to state 1 is rather small. Hence, if this transition rate is large, $(1,0)$ would be unstable. The inequalities needed for $(1,0)$ to be unstable (Assumption \ref{ass}(vi)) are
\begin{equation*}\begin{split}
\sm 1.15 & + .5 \big( \alpha +1 \big)  + q_{12} \frac{ c_1 -c_2  }{1 - \alpha c_1} < 0  \\
.65 & + .5 \big( \alpha +1 \big) + q_{21} \frac{ c_2 -c_1  }{1 - \alpha c_2} < 0  \\
\end{split}\end{equation*}  
where, again, $0<\alpha<1$, $1 - \alpha c_1>0$, and $1 - \alpha c_2>0$. Thus, if $q_{21} > q_{12}$ one may be able to find $\alpha$, $c_1$, and $c_2$ to make these inequalities hold. However, if $q_{12} > q_{21}$ then it might be possible for $(1,0)$ to be stable, i.e.,
\begin{equation*}\begin{split}
\sm 1.15 & + .5 \big( \sm \alpha +1 \big)  + q_{12} \frac{ c_1 -c_2  }{1 +\alpha c_1} > 0  \\
.65 & + .5 \big( \sm \alpha +1 \big) + q_{21} \frac{ c_2 -c_1  }{1 + \alpha c_2} > 0  \\
\end{split}\end{equation*}  
which is Assumption \ref{ass}(iii). In particular, take $q_{12}=1$ and $q_{21}=3$. Clearly, we should be able to find constants such that the inequalities in Assumption \ref{ass}(vi) hold. In fact, define $\alpha=.001$, $c_1=-1$, and $c_2=-1.5$. Then
\begin{equation*}\begin{split}
\sm 1.15 & + .5 \big( \alpha +1 \big)  + 1\frac{ c_1 -c_2  }{1 - \alpha c_1} =-.9 < 0  \\
.65 & + .5 \big( \alpha +1 \big) + 3\frac{ c_2 -c_1  }{1 - \alpha c_2} \approx-6.3382 < 0 . \\
\end{split}\end{equation*}

%General n
\section{Analysis With a Markov Chain with General Finite State Space. }
In this section we consider a generalized version of Equation \eqref{M}, where $r(t) \in \{1,2, \ldots,n\}$, for some finite integer $n$. The analysis for the case when the state space was $\{1,2 \}$ used only the characteristic of the dynamic near the vertices, and not the assumption that $r(t)$ jumps between two atoms. Therefore, we have analogous results with identical proofs. However, since we have a general finite $n$ state space, we need to slightly change the assumptions for stability and instability of the vertices.

\smallskip
Take $r(t)$ to be a continuous time Markov chain in state space $\{1,2, \ldots,n\}$, for some finite integer $n$, with infinitesimal generator $\displaystyle Q=
\left(
\begin{array}{cccc}
  q_{11} & q_{12}  & \ldots & q_{1n}  \\
   \vdots  &  \vdots  & \ddots   &    \vdots        \\                     
 q_{n1}  &  q_{n2}   & \ldots & q_{nn}   
\end{array}
\right).
$  Recall that, for $i\neq j$, $q_{ij}>0$ is the transition rate from $i$ to $j$, and $\displaystyle q_{ii}=-\sum_{j \neq i} q_{ij}$. Furthermore for $j \neq i$, $\displaystyle P\Big( r\big(t+\delta)=j \Big| r(t)=i \Big)=q_{ij}\delta + o(\delta)$, and $\displaystyle P\Big( \tau_{k+1} -\tau_{k} \geq T \Big| r(\tau_k)=i \Big)=e^{q_{ii}T}$ for all $T \geq 0$. Just as in the previous sections, we assume that the Markov chain is independent of the Brownian motion.  

We consider the dynamic
 \begin{equation}\begin{split}\label{rms_n}
dS(t) & = s(t) \big( 1-s(t) \big) \bigg[  -B_{r(t)}  + \big( A_{r(t)}  +   B_{r(t)} \big)s(t)   \bigg] dt + s(t) \big( 1-s(t) \big) \sigma_{r(t)} dW (t),
\end{split}\end{equation}
where $r(t) \in \{1,2, \ldots,n\}$, and the payoff matrix 
$\displaystyle 
A_i=\left(
\begin{array}{cc}
a_i  & b_i    \\
c_i  & d_i     
\end{array}
\right)
$
and variances $\sigma_{1i}$, $\sigma_{2i}$, and $\sigma_i:=\sqrt{ \sigma_{1i}^2+ \sigma_{2i}^2 }$ correspond to the state $r(t)=i$.

For $L_n$ the infinitesimal operator, and $V(x,i) \in [0,1] \times \{1,2, \ldots, n\}$,  where for each $i$, $V( \cdot ,i)$ is twice continuously differentiable, we have 
 \begin{equation*}\begin{split}
 L_nV(x,i) & = x \big( 1-x \big) \Big[ -B_{i}  + \big( A_{i}  +   B_{i} \big) x  \Big] V'(x,i) + \frac{ \sigma_i^2}{2} x^2(1-x)^2 V''(x,i) + \sum_{j=1}^{n} q_{ij} V(x,j).
\end{split}\end{equation*}

\bigskip
Applying $L_n$ to the same Lyapunov functions as in the previous sections, $V_0^{\pm}(x,i)$ and $V_1^{\pm}(x,i)$, we derive the following assumptions.  

\bigskip
\begin{ass} \label{ass_n}
Assume there exists $0<\alpha<1$, $c_1$,  $c_2$, $\ldots$, $c_{n-1}$, and $c_n$ where one of the following inequalities holds for each $i\in \{1,2, \ldots,n\}$:
\begin{enumerate}[(i)]
\item $1- \alpha c_i >0$ and $ \displaystyle -B_i +\frac{\alpha -1}{2} \sigma_i^2 + \sum_{i=1}^{n} q_{ij} \frac{ c_i -c_j  }{ 1- \alpha c_i } < 0$;
\item $1+ \alpha c_i >0$ and $ \displaystyle -B_i +\frac{-\alpha -1}{2} \sigma_i^2 + \sum_{i=1}^{n} q_{ij} \frac{ c_i -c_j  }{ 1+ \alpha c_i } > 0$;
\item $1+\alpha c_i >0$ and  $ \displaystyle A_i +\frac{-\alpha + 1}{2} \sigma_i^2 + \sum_{i=1}^{n} q_{ij} \frac{ c_i -c_j  }{ 1+ \alpha c_i } > 0$;
\item $1- \alpha c_i >0$ and  $  \displaystyle A_i +\frac{\alpha + 1}{2} \sigma_i^2 + \sum_{i=1}^{n} q_{ij} \frac{ c_i -c_j  }{ 1- \alpha c_i } < 0$.
\end{enumerate}
\end{ass}

\bigskip
We use these assumptions to tell us the stability or instability of the process near the stationary points. Again, the arguments follow closely the arguments given in Theorems 5.3.1, 5.4.1, and 5.4.2 in \cite{RK80}, and so we omit the proof.  

\bigskip
\begin{thm}
For the dynamic defined by Equation \eqref{rms_n}, if: 
\begin{enumerate}[(1)]
\item Assumption \ref{ass_n}(i) holds, then $0$ is asymptotically stable in probability;
\item Assumption \ref{ass_n}(ii) holds, then $0$ is unstable in probability;
\item Assumption \ref{ass_n}(iii) holds, then $1$ is asymptotically stable in probability;
\item Assumption \ref{ass_n}(iv) holds, then $1$ is unstable in probability.
\end{enumerate}
\end{thm}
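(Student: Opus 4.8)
The plan is to mirror the two-state argument that the paper has already reduced to the statements in Theorems 5.3.1, 5.4.1, and 5.4.2 of Khasminskii \cite{RK80}, observing that nothing in that reduction used $n=2$. Concretely, for part (1) I would take the Lyapunov function $V_0^{+}(x,i)=(1-\alpha c_i)x^{\alpha}$ with $\alpha$, $c_1,\dots,c_n$ as furnished by Assumption \ref{ass_n}(i). First I would compute $L_n V_0^{+}(x,i)$ exactly as in the two-state case; the only change is that the single switching term $q_{ij}(c_i-c_j)$ is replaced by $\sum_{j=1}^{n} q_{ij}(c_i - c_j)$ (using $\sum_j q_{ij}=0$ so that the $V(x,i)$-coefficient collapses correctly), giving
$$
L_n V_0^{+}(x,i) = \alpha(1-\alpha c_i)x^{\alpha}\Bigg\{ (1-x)\big[-B_i + (A_i+B_i)x\big] + \frac{\alpha-1}{2}\sigma_i^2(1-x)^2 + \frac{1}{1-\alpha c_i}\sum_{j=1}^{n} q_{ij}(c_i-c_j)\Bigg\}.
$$
Evaluating the brace at $x=0$ gives precisely the quantity Assumption \ref{ass_n}(i) makes strictly negative, so by continuity there is a neighborhood $[0,\delta)$ of $0$ and a constant on which the brace stays negative for every $i\in I$; hence $L_n V_0^{+}\le 0$ on $[0,\delta)\times I$, so $V_0^{+}(S(t))$ is a nonnegative supermartingale there. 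Combined with $\lim_{x\searrow 0}V_0^{+}(x,i)=0$ (noted in the excerpt), the standard supermartingale/Khasminskii argument yields stability in probability, and refining $\delta$ gives $P_{x,i}(\lim_{t}S(t)=0)\to 1$, i.e.\ asymptotic stability in probability.

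Parts (2), (3), (4) follow the same template with $V_0^{-}$, $V_1^{+}$, $V_1^{-}$ respectively: in each case I would write down $L_n$ applied to the function, check that the sign of the prefactor is controlled by the corresponding positivity condition ($1\pm\alpha c_i>0$), evaluate the brace at the relevant vertex ($x=0$ for (2), $x=1$ for (3) and (4)), and use the matching inequality in Assumption \ref{ass_n} to make the brace the right sign on a neighborhood of that vertex uniformly in $i$. For the instability statements (2) and (4) one uses instead that $V_0^{-}$, resp.\ $V_1^{-}$, blows up at the vertex while being a supermartingale on a punctured neighborhood, which forces the process to leave every small neighborhood of the vertex almost surely; regularity of the process (Remark \ref{re1}) and Proposition \ref{leave}, which guarantees finite exit time from strict subintervals and does not depend on $n$, supply the remaining ingredients. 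Then I would simply cite Theorems 5.3.1, 5.4.1, 5.4.2 of \cite{RK80}, noting as the paper does that their proofs use only continuity of the process and the strong Markov property — both valid for the switched diffusion $S(t)$ regardless of the size of the chain's state space — so the conclusions transfer verbatim.

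The only genuine point requiring care — the "main obstacle," such as it is — is checking that the generator computation is unchanged in form when $n>2$: one must verify that $\sum_{j} q_{ij}V(x,j) = \sum_{j\ne i} q_{ij}\big(V(x,j)-V(x,i)\big)$ absorbs cleanly into the brace, producing the term $\tfrac{1}{1\pm\alpha c_i}\sum_{j=1}^{n} q_{ij}(c_i-c_j)$ after factoring out $(1\mp\alpha c_i)x^{\pm\alpha}$ from each $V_0^{\pm}(x,j)=(1\mp\alpha c_j)x^{\pm\alpha}$. This is a one-line algebra check (the factor $x^{\pm\alpha}$ is common to all states since the Lyapunov functions differ across $i$ only through the constant $1\mp\alpha c_i$), after which the evaluation at the vertices and the uniform-in-$i$ neighborhood argument are identical to the $n=2$ case already handled in Theorem \ref{thm_main}. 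Hence the proof is omitted in the same spirit as that of Theorem \ref{thm_main}.
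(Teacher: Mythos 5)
Your proposal is correct and follows exactly the route the paper takes: the paper omits this proof, citing Theorems 5.3.1, 5.4.1, and 5.4.2 of Khasminskii \cite{RK80} together with the Lyapunov functions $V_0^{\pm}, V_1^{\pm}$, and your verification that the switching term collapses to $\sum_{j} q_{ij}(c_i-c_j)/(1\mp\alpha c_i)$ is precisely the only new ingredient needed for general $n$. Your generator computation and the evaluation of the braces at the vertices match the paper's Assumption \ref{ass_n}, so the argument transfers verbatim as you claim.
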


\bigskip
From this theorem we are able to derive the following propositions. The proof for each proposition follow the same argument as the analogous proposition in the previous section. For brevity, we merely state each proposition. Just as in the previous section, if the dynamics that are switched all have the same unstable vertex, then this vertex will also be unstable for the switched dynamic.This can be shown by setting $c_1=c_2=\ldots=c_n$ for the appropriate $V_{\cdot}^{-}$ function. The same reasoning my be applied to a stable vertex.

Just as in the previous section, we take the initial condition to be in $(0,1) \times I$.

\bigskip
\begin{prop}[Strategy 2 Dominant]\label{0as_n}
For the dynamic defined by Equation \eqref{rms_n}, if Assumptions \ref{ass_n}(i) and \ref{ass_n}(iv) holds, then $S(t) $ converges to 0 almost surely.  
\end{prop}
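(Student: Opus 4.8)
The plan is to follow the proof of Proposition~\ref{0as} essentially verbatim, since that argument used only the behaviour of the dynamic near the two vertices, the strong Markov property, and Proposition~\ref{leave}, none of which is sensitive to the size of the state space of $r(t)$ beyond its finiteness. First I would extract two local facts. From Assumption~\ref{ass_n}(iv): the bracket in $L_n V_1^{-}$ evaluated at $x=1$ equals $A_i+\frac{\alpha+1}{2}\sigma_i^2+\sum_{j} q_{ij}\frac{c_i-c_j}{1-\alpha c_i}<0$, so by continuity there are $\delta>0$ and $\alpha_1>0$ with $L_n V_1^{-}(x,i)\le-\alpha_1$ for all $(x,i)\in(1-\delta,1)\times I$; Dynkin's formula, as in Proposition~\ref{leave}, then gives $E_{x,i}[\tilde\tau_\delta]<\infty$, so from every start in $(1-\delta,1)$ the process almost surely exits that neighbourhood of $1$. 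From Assumption~\ref{ass_n}(i): the bracket in $L_n V_0^{+}$ evaluated at $x=0$ equals $-B_i+\frac{\alpha-1}{2}\sigma_i^2+\sum_{j} q_{ij}\frac{c_i-c_j}{1-\alpha c_i}<0$, so there is an $\epsilon>0$ with $L_n V_0^{+}(x,i)\le 0$ on $(0,\epsilon)\times I$; since $V_0^{+}(x,i)\searrow 0$ as $x\searrow 0$, the supermartingale property yields that $0$ is asymptotically stable in probability, and in particular $P_{x,i}\big(\lim_{t\to\infty}S(t)=0\big)>0$ for $(x,i)\in(0,\epsilon)\times I$.

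Next I would show $S(t)$ is transient, exactly as in Proposition~\ref{0as}: fix $\epsilon_0\in(0,\epsilon)$, take the irreducible reference measure $\Phi(A)=M\big(A\cap(\epsilon_0,1-\delta)\big)$ for $A\in\mathfrak{B}((0,1))$ with $M$ Lebesgue measure, observe that regularity of the process (Remark~\ref{re1}) together with nondegeneracy of the diffusion on $(\epsilon_0,1-\delta)$ makes $\big(S(t),r(t)\big)$ a $\Phi$-irreducible $T$-process for which $(\epsilon_0,1-\delta)\times I$ is petite, note that $P_{x,i}(\tau_\epsilon=\infty)>0$ for $x\in(\epsilon_0,\epsilon)$, and invoke Theorem~3 of Meyn and Tweedie~\cite{MT93_1} to conclude that the process is transient from every initial condition.

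Finally I would convert transience into almost sure convergence to $0$. The process remains in $(0,1)$ at all finite times (Remark~\ref{re1}) but visits each compact subinterval only finitely often, so path continuity forces $\lim_{t\to\infty}S(t)$ to exist in $\{0,1\}$ almost surely; the uniform bound $E_{x,i}[\tilde\tau_\delta]<\infty$ then rules out the value $1$, giving $P_{x,i}\big(\lim_{t\to\infty}S(t)=0\big)=1$ for $x$ near $0$ and then, by the strong Markov property and the finite expected exit time from any compact subinterval supplied by Proposition~\ref{leave}, for every initial condition in $(0,1)\times I$. The genuine obstacle is not the Lyapunov bookkeeping, which reproduces the two-state case word for word, but verifying that the Markov-modulated diffusion $\big(S(t),r(t)\big)$ truly fits the Meyn--Tweedie framework on $(0,1)\times I$ --- the $\Phi$-irreducibility, the $T$-process and petiteness properties, and the identification of transience with escape to the endpoints --- and then excluding the endpoint $1$ cleanly; one should also note that Assumptions~\ref{ass_n}(i) and (iv) constrain the disjoint Lyapunov functions $V_0^{+}$ near $0$ and $V_1^{-}$ near $1$, so the constants $\alpha,c_1,\dots,c_n$ can be chosen to satisfy both at once without conflict.
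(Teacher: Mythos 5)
Your proposal is correct and matches the paper's intended argument: the paper itself omits the proof, stating that it follows the two-state case (Proposition \ref{0as}) verbatim, which is exactly the route you take --- local Lyapunov estimates from Assumptions \ref{ass_n}(i) and \ref{ass_n}(iv), finite exit times via Dynkin and Proposition \ref{leave}, transience via Theorem 3 of Meyn--Tweedie with the same reference measure, and then almost sure convergence to $0$. Your added remarks on verifying the Meyn--Tweedie hypotheses and excluding the endpoint $1$ only make explicit steps the paper leaves implicit.
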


\bigskip
\begin{cor}[Strategy 1  Dominant]
If Assumptions \ref{ass_n}(ii) and \ref{ass_n}(iii) holds, then $ S(t) $ converges to 0 almost surely. 
\end{cor}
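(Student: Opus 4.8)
The plan is to reproduce the argument of Proposition~\ref{0as_n}: passing from the two-state dynamic to a general finite-state Markov chain changes nothing in the topological part of the proof, only the algebra of the switching sums. Concretely, I would show that $S(t)$ is transient with its mass escaping to the vertex $0$, the two facts driving this being a local absorption estimate at $0$ and a finite-mean exit estimate away from $1$; these are fed into Proposition~\ref{leave} and the transience criterion of Theorem~3 of Meyn and Tweedie \cite{MT93_1}, exactly as in the previous section.

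First I would extract the two local estimates used in Proposition~\ref{0as_n}. Assumption~\ref{ass_n}(ii) gives $1+\alpha c_i>0$ together with a strict inequality that makes the associated Lyapunov function a supermartingale on a one-sided neighborhood of $0$; this yields an $\epsilon>0$ for which $P_{x,i}\big(\lim_{t\to\infty}S(t)=0\big)>0$ whenever $x<\epsilon$, uniformly in $i$. Assumption~\ref{ass_n}(iii) likewise produces, by continuity of the drift in $x$, constants $\delta>0$ and $\alpha_1>0$ with $L_n V(x,i)\le-\alpha_1$ for all $x\in(1-\delta,1)$ and every $i$; Dynkin's formula then forces $E_{x,i}\big[\tilde\tau_\delta\big]<\infty$, so the frequency leaves every neighborhood of $1$ in finite mean time.

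Next I would establish transience exactly as in Proposition~\ref{0as_n}. Fixing $\epsilon_0$ with $0<\epsilon_0<\epsilon$ and setting $\Phi(A)=M\big(A\cap(\epsilon_0,1-\delta)\big)$ for Lebesgue measure $M$ and Borel $A\subset(0,1)$, regularity of the dynamic (Remark~\ref{re1}) makes $\Phi$ an irreducibility measure. For $x\in(\epsilon_0,\epsilon)$ the absorption estimate gives $P_{x,i}\big(\tau_\epsilon=\infty\big)>0$, so the hypotheses of Theorem~3 of \cite{MT93_1} are met and the process is transient. Transience forbids the process from returning to the band $(\epsilon_0,1-\delta)$ infinitely often, while the finite-mean exit estimate at $1$ rules out accumulation there; Proposition~\ref{leave} guarantees that every interior subinterval is vacated in finite mean time, supplying the regeneration needed to iterate the strong Markov property. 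Together these force every path to leave each compact subset of $(0,1)$ through the endpoint $0$, so $S(t)\to0$ almost surely.

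The step I expect to be the main obstacle is combinatorial rather than topological: one must exhibit a single $\alpha\in(0,1)$ and a single family of constants $c_1,\dots,c_n$ making the sign constraints $1+\alpha c_i>0$ and the strict inequalities of Assumptions~\ref{ass_n}(ii) and \ref{ass_n}(iii) hold simultaneously across all $n$ states. In the two-state case each switching sum collapses to the single term $q_{ij}(c_i-c_j)/(1+\alpha c_i)$, but for general $n$ the sums $\sum_{j}q_{ij}(c_i-c_j)/(1+\alpha c_i)$ couple every state to every other, so a choice of $c_i$ that sharpens the estimate at one vertex in state $i$ may spoil it in another state. Verifying that a consistent choice exists, with all denominators kept positive, is the only place where the finiteness of the state space is genuinely used, and it is where I would concentrate the effort.
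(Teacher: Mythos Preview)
Your proposal reverses the roles of Assumptions~\ref{ass_n}(ii) and \ref{ass_n}(iii). By Theorem~6.1, Assumption~\ref{ass_n}(ii) is the condition making $x=0$ \emph{unstable} (it is the hypothesis under which $V_0^{-}$ is a local supermartingale near $0$, so one gets $E_{x,i}[\tau_\epsilon]<\infty$ for $x$ near $0$, not a positive absorption probability at $0$). Likewise Assumption~\ref{ass_n}(iii) is the condition making $x=1$ \emph{asymptotically stable} (it drives $V_1^{+}$, yielding $P_{x,i}\big(\lim_{t\to\infty}S(t)=1\big)>0$ for $x$ near $1$, not a finite-mean exit from a neighborhood of $1$). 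Your first paragraph of local estimates therefore has each assumption doing the job of the other, and the transience argument you build on them collapses: the event $\{\tau_\epsilon=\infty\}$ does not have positive probability under (ii).

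The correct conclusion under (ii) and (iii) is $S(t)\to 1$ almost surely; the ``$0$'' in the printed statement is a typo, as comparison with the two-state corollary (``Strategy 1 Dominant'') makes clear. The paper's argument is exactly the mirror image of Proposition~\ref{0as}: finite-mean exit near $0$ from (ii), positive absorption probability near $1$ from (iii), and then the same transience criterion of Meyn and Tweedie with the roles of the endpoints exchanged. Once you swap the two local estimates accordingly, your outline is the paper's proof.

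A smaller point: your final paragraph worries about exhibiting a \emph{single} $\alpha$ and a single family $c_1,\ldots,c_n$ satisfying (ii) and (iii) simultaneously. This is not required. Each part of Assumption~\ref{ass_n} is a separate existence statement, so the constants witnessing (ii) (used to build $V_0^{-}$) may be entirely different from those witnessing (iii) (used to build $V_1^{+}$). There is no combinatorial coupling to resolve.
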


\bigskip
These conditions state the sample paths of the process tends to veer away from a vertex and evolve to the stable vertex. The initial condition does not have an influence on the dynamic.

\bigskip
\begin{prop}[Coordination]
For the dynamic defined by Equation \eqref{rms_n}, if Assumptions \ref{ass_n}(i) and \ref{ass_n}(iii) holds, then $S(t)$ is transient and converges to 0 or 1 almost surely.  
\end{prop}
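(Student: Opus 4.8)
The plan is to mimic exactly the argument used for the two-state Coordination Game proposition, since nothing in that proof used the cardinality of the state space of $r(t)$ beyond finiteness, and Proposition \ref{leave} was already stated in a form that holds for any finite $I$. First I would fix an arbitrarily small $\epsilon>0$. By Assumption \ref{ass_n}(i), the point $0$ is asymptotically stable in probability, so there is a $\delta_0>0$ with $P_{x,i}\big(\lim_{t\to\infty}S(t)=0\big)\ge 1-\epsilon$ for all $x<\delta_0$ and all $i\in I$; by Assumption \ref{ass_n}(iii), the point $1$ is asymptotically stable in probability, so there is a $\delta_1>0$ with $P_{x,i}\big(\lim_{t\to\infty}S(t)=1\big)\ge 1-\epsilon$ for all $x>1-\delta_1$ and all $i\in I$. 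Put $\delta=\min\{\delta_0,\delta_1\}$ and let $F=\big\{\lim_{t\to\infty}S(t)=0\ \text{or}\ 1\big\}$.

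Next I would introduce the exit time $\hat\tau_\delta$ from the interval $(\delta,1-\delta)$. Proposition \ref{leave} (whose proof, as remarked in the text, is insensitive to $|I|$) gives $E_{x,i}[\hat\tau_\delta]<\infty$ for every initial condition $(x,i)$ with $x\in(\delta,1-\delta)$; in particular $\hat\tau_\delta<\infty$ a.s. By Remark \ref{re1} the dynamic is regular, so $S(\hat\tau_\delta)\in\{\delta,1-\delta\}$ a.s.\ and the strong Markov property applies at the stopping time $\hat\tau_\delta$. Conditioning on the post-$\hat\tau_\delta$ trajectory and on the value of $r(\hat\tau_\delta)$,
\begin{equation*}
P_{x,i}(F)=E_{x,i}\Big[E_{S(\hat\tau_\delta),\,r(\hat\tau_\delta)}[\mathbf{1}_F]\Big]\ge 1-\epsilon,
\end{equation*}
because whichever endpoint-neighborhood boundary the process lands on, the corresponding lower bound $1-\epsilon$ applies uniformly over $i\in I$. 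Since $\epsilon>0$ was arbitrary, $P_{x,i}(F)=1$, i.e.\ $S(t)$ converges to $0$ or $1$ almost surely; transience then follows exactly as in Proposition \ref{0as}, via Theorem 3 of Meyn and Tweedie \cite{MT93_1} applied with the irreducibility measure $\Phi(A)=M\big(A\cap(\epsilon_0,1-\delta)\big)$ for a suitably chosen $0<\epsilon_0<\delta$, using that $P_{x,i}(\tau_\epsilon=\infty)>0$ on the stable side.

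I do not expect any serious obstacle: the only point requiring care is the uniformity over $i\in I$ in the two "asymptotic stability" bounds, which is exactly what Assumption \ref{ass_n}(i) and (iii) deliver through the preceding theorem, and the legitimacy of applying the strong Markov property at $\hat\tau_\delta$ for the joint process $(S(t),r(t))$ — this is standard once regularity (Remark \ref{re1}) and the finite-mean exit time (Proposition \ref{leave}) are in hand. Everything else is a verbatim transcription of the two-state proof.
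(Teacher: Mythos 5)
Your proposal is correct and follows essentially the same route as the paper: the paper proves the general finite-state case by transcribing the two-state Coordination argument verbatim (uniform $1-\epsilon$ bounds near each vertex from Assumptions \ref{ass_n}(i) and \ref{ass_n}(iii), finite exit time from $(\delta,1-\delta)$ via Proposition \ref{leave}, then the strong Markov property at $\hat{\tau}_\delta$), which is exactly what you do. Your extra explicit appeal to Meyn--Tweedie for transience and your remark on uniformity over the finite set $I$ only make explicit what the paper leaves implicit.
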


\bigskip
The assumptions tell us that both vertices are stable, and as such the initial condition should affect the probabilities of converging to 0 and converging to 1. However, we are unable to determine these probabilities. 

\bigskip
\begin{prop}[Mixed Strategy  Dominant]
For the dynamic defined by Equation \eqref{rms_n}, if Assumptions \ref{ass_n}(ii) and \ref{ass_n}(iv) holds, then $S(t)$ is positive recurrent in a strict subinterval on $(0,1)$.  
\end{prop}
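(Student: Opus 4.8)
The plan is to reduce the general-$n$ statement to exactly the structure of the two-state proof of the ``Mixed Strategy Dominant'' proposition, using only the fact that the state space of $r(t)$ is finite. First I would invoke the preceding theorem: under Assumption~\ref{ass_n}(iv) there is a $\delta>0$ and an $\alpha_1>0$ such that for every $x\in(1-\delta,1)$ and every $i\in I$ the Lyapunov function $V_1^{-}(\cdot,i)$ satisfies $L_nV(x,i)\le -\alpha_1$; similarly, under Assumption~\ref{ass_n}(ii) there is an $\epsilon>0$ and an $\alpha_0>0$ such that for every $x\in(0,\epsilon)$ and every $i\in I$ the function $V_0^{-}(\cdot,i)$ satisfies $L_nV(x,i)\le -\alpha_0$. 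The point is that, because $I=\{1,\dots,n\}$ is finite, the finitely many single-state inequalities can be satisfied simultaneously by shrinking $\delta$ and $\epsilon$, so the bounds hold uniformly in $i$.

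Next I would use a Dynkin/Khasminskii argument exactly as in Proposition~\ref{leave}: from $L_nV\le-\alpha_1$ on $(1-\delta,1)\times I$ and $L_nV\le-\alpha_0$ on $(0,\epsilon)\times I$, applying Dynkin's formula to $V_1^{-}$ and $V_0^{-}$ respectively and letting the time horizon go to infinity (monotone convergence) yields $E_{x,i}[\tilde\tau_\delta]<\infty$ for initial $x\in(1-\delta,1)$ and $E_{x,i}[\tau_\epsilon]<\infty$ for initial $x\in(0,\epsilon)$, where $\tau_\epsilon$, $\tilde\tau_\delta$ are the first exit times from $(0,\epsilon)$ and $(1-\delta,1)$. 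Hence from any point near a vertex the process reaches the compact middle set $[\epsilon,1-\delta]$ in finite expected time. Combining this with Proposition~\ref{leave}, which gives finite expected exit time from \emph{any} strict subinterval of $(0,1)$ (and whose proof, as the paper stresses, does not depend on $n$), I would set up a renewal/strong-Markov cycle argument: starting in $[\epsilon,1-\delta]$, the process either stays in a slightly larger strict subinterval $(\omega_0,\omega_1)\supset[\epsilon,1-\delta]$ or exits it in finite expected time, after which it lies in $(0,\epsilon)$ or $(1-\delta,1)$ and returns to $[\epsilon,1-\delta]$ in finite expected time; iterating, the successive return times to $[\epsilon,1-\delta]$ have finite expectation. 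The strong Markov property (valid since $S(t)$ is a continuous-state, finite-mode Markov process and the $\tau$'s are stopping times) then gives that $S(t)$ is recurrent with respect to $[\epsilon,1-\delta]$, and since this set is contained in a strict subinterval $(\omega_0,\omega_1)$ with $0<\omega_0<\omega_1<1$, the process is positive recurrent there in the sense of the Definition. Applying Proposition~\ref{leave} once more upgrades this to positive recurrence with respect to any strict subinterval of $(0,1)$, which is the claimed statement.

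The main obstacle, and the step deserving the most care, is the uniformity over the $n$ modes: one must verify that the single-mode sign conditions collected in Assumption~\ref{ass_n}(ii) and (iv) — each of which, after plugging $x=0$ or $x=1$ into the curly-bracket expression in $L_nV_0^{-}$ or $L_nV_1^{-}$, gives a strict inequality at the vertex — persist on a common one-sided neighborhood of the vertex for all $i$ simultaneously. Since there are only finitely many continuous functions $x\mapsto(\text{curly bracket})_i$ and each is strictly negative (resp. the relevant combination keeps the product $\le 0$) at the vertex, one takes the minimum over $i$ of the mode-wise neighborhood radii; finiteness of $n$ is exactly what makes this minimum positive. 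The remaining pieces — Dynkin's formula, monotone convergence, and the strong-Markov renewal bookkeeping — are routine and identical to the two-state proofs already given, so I would present them briefly and refer back to Proposition~\ref{leave} and the ``Mixed Strategy Dominant'' proposition of the previous section rather than repeating the computations.
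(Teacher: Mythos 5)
Your proposal is correct and follows essentially the same route as the paper: the paper omits the general-$n$ proof, referring back to the two-state Mixed Strategy Dominant argument, which is exactly what you reproduce — Lyapunov bounds from Assumptions \ref{ass_n}(ii) and \ref{ass_n}(iv) near each vertex giving finite expected exit times, then the strong Markov property plus Proposition \ref{leave} to conclude positive recurrence on a strict subinterval. Your extra care about uniformity in $i$ (using finiteness of the mode space) is a sensible, if routine, elaboration rather than a departure.
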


\bigskip
The vertices in these assumptions are unstable and thus we have that the process is recurrent.

\section{Conclusions}
We considered a new dynamic that consisted of independent switching between the underlying game payoffs and variances of the stochastic replicator process. Conditions for the switched stochastic replicator dynamic to be unstable or asymptotically stable for each vertex were derived. The combination of these conditions helped characterize the long-run behavior of the dynamic. The characterizations are an extension to the inequalities derived by Fudenberg and Harris \cite{FH92} that include the difference of the payoffs in a column of the payoff matrix, half the difference of the variances, and the \textit{comparison} of a fixed state transitioning to all other states. Since the switching indirectly perturbs the dynamic, each state is represented by an appropriately chosen, but not unique, constant. The difference of the constant for the fixed state and the constant of another state, multiplied by the transition rate from the fixed state to the other state, is the way we compare these two states. The sum of these terms is the \textit{comparison} of the fixed state to all other states. There is an extra term involving the variance, however, this term is negligible. 

Although the proofs focused on the case when the continuous Markov chain is in a two atom state space, the arguments have a very natural extension to the general finite atom state space. It may be possible to find better stochastic Lyapunov functions to obtain more precise conditions than the ones derived in this paper. However, since the inequalities are an extension of the criteria derived by Fudenberg and Harris, the conditions may be optimal. 

Finally, I believe that the assumptions given are exhaustive. Consider this example in a two atom state: if $\displaystyle - B_1 + \frac{ - \alpha - 1}{2} \sigma_1^2>0$, $\displaystyle  - B_2 + \frac{ - \alpha - 1}{2} \sigma_2^2 <0$, $\displaystyle - B_1 + \frac{ - \alpha - 1}{2} \sigma_1^2 = \Big| - B_2 + \frac{ - \alpha - 1}{2} \sigma_2^2\Big|$, $\displaystyle - B_1 + \frac{ \alpha - 1}{2} \sigma_1^2 =\Big| - B_2 + \frac{ \alpha - 1}{2} \sigma_2^2\Big|$, and $q_{12}=q_{21}$, then we are unable to saying anything about the dynamic 0. However, the two equalities imply that $\displaystyle \Big| - B_2 + \frac{ - \alpha - 1}{2} \sigma_2^2\Big| < \Big| - B_2 + \frac{ \alpha - 1}{2} \sigma_2^2\Big|$, which is impossible. Similar reasoning for other cases is how this hypothesis was derived.

\begin{acknowledgement*}
The author would like to thank Marc Harper for his wonderful comments.
\end{acknowledgement*}

\bibliography{ref_rms}                                                                                                                                               
\bibliographystyle{plain}                                                                                                                                                                              
\nocite{*}

\end{document}